\newtheorem{claim}{Claim}
\newtheorem{lemma}{Lemma}
\newcommand{\re}{\text{\,\rm re}}
\newcommand{\im}{\text{\,\rm im}}
\newcommand{\bs}{\boldsymbol}
\begin{document}

\title{Bayesian inference of CMB gravitational lensing}
\shorttitle{Bayesian CMB lensing}
\shortauthors{E. Anderes, B. D. Wandelt \& G. Lavaux}

\author{Ethan Anderes\altaffilmark{1}}
\affil{Department of Statistics, University of California, Davis CA 95616, USA.}

\author{Benjamin D. Wandelt\altaffilmark{2}}
\author{Guilhem Lavaux\altaffilmark{2}}

\affil{Sorbonne Universit\'es, UPMC Univ Paris 06 \& CNRS, UMR7095, Institut d'Astrophysique de Paris, F-75014, Paris, France}

\begin{abstract} 
The Planck satellite, along with several ground based telescopes, have mapped the  cosmic microwave background (CMB) at sufficient resolution and signal-to-noise so as to allow a detection of the subtle distortions  due to the gravitational influence of the intervening  matter distribution. A natural modeling approach is to write a Bayesian hierarchical model for the lensed CMB in terms of the unlensed CMB and the lensing potential. So far there has been no feasible algorithm for inferring the posterior distribution of the lensing potential from the lensed CMB map. We propose a solution that allows efficient Markov Chain Monte Carlo sampling from the joint posterior of the lensing potential and the unlensed CMB map using the Hamiltonian Monte Carlo technique. The main conceptual step in the solution is  a re-parameterization of CMB lensing in terms of the lensed CMB and the ``inverse lensing'' potential. We demonstrate a fast implementation on simulated data including noise and a sky cut, that uses a further acceleration based on a very mild approximation of the inverse lensing potential.  We find that the resulting Markov Chain has short correlation lengths and excellent convergence properties, making it promising for application to high resolution CMB data sets of the future.
\end{abstract}

\keywords{CMB -- gravitational lensing -- Bayesian -- Gibbs sampler --ancillary Gibbs chain -- Sufficient Gibbs chain}

\section{Introduction}

 Over the past few years, data from ground based telescopes (ACT, SPT, Polarbear) and the Planck satellite have resulted in an unprecedented detection of weak gravitational lensing of the cosmic microwave background (CMB) \citep{das2011detection,van2012measurement,planck2013lensing,Polarbear2014,planck2015lensing}.  Upcoming high resolution, high signal-to-noise experiments are poised to    make the gravitational lensing distortion  a  powerful probe of cosmology, dark matter, and neutrino physics. The state-of-the-art estimator of CMB gravitational lensing, the quadratic estimator developed by Hu and Okomoto \citep{hu2001mapping,hu2002mass}, works in part through a delicate cancellation of terms in an infinite Taylor expansion of the lensing effect on the CMB. The effect of this cancellation is particularly sensitive to foreground contaminants and sky masking, which  if not fully accounted for,  limits  the statistical inferential power of this new data.  

 Possibly the most promising alternative to the quadratic estimator is Bayesian lensing. It has been known for some time that the quadratic estimator is suboptimal for high signal-to-noise, high resolution experiments and that a full Bayesian treatment can overcome this limitation
 \citep{HirataSeljak1, HirataSeljak2}. Indeed, Bayesian techniques applied to the lensed CMB observations have the potential to drastically change the way lensing is estimated and used for inference.  Current frequentist estimators of the unknown lensing potential treat the unlensed CMB as a source of shape noise which is marginalized out. Conversely, a Bayesian lensing posterior treats the lensing potential {\em and} the unlensed CMB as joint unknowns, whereby obtaining scientific constrains jointly rather than marginally. Moreover, the  posterior distribution is easier to interpret and sequentially update with additional data. From the geometry of weak lensing, most of the lensing power comes from matter at high redshift $z\sim 2$. At these distances the matter distribution on large scales is well approximated by Gaussian density fluctuations. In addition, the unlensed CMB is, at present, indistinguishable from an isotropic Gaussian random field.  From a statistical perspective, this is a perfect scenario for Bayesian methods in that both the observations and the unknown lensing potential are {\em physically predicted} to be Gaussian random fields.
 
Physicists have known, for some time, that Bayesian methods could potentially provide next-generation lensing estimates. In their seminal review \cite{Lewis20061} discuss the possibility of obtaining posterior draws from the lensing potential and the unlensed CMB jointly. However, they acknowledge the main obstacle for naive Gibbs implementations:
\begin{quote}
``... given a particular lensing potential the delensed sky is given essentially by a delta function.
This means that naive Gibbs iterations will not converge within a reasonable time. At the time of writing there are no known practical methods for sampling from the full posterior distribution."
\end{quote}
In this paper we show that, indeed, there does exist a practical way to obtain Gibbs iterations which converge quickly. The solution is through a re-parameterization of CMB lensing problem. Instead of treating the lensing potential as unknown we work with inverse-lensing or what we call anti-lensing. Surprisingly, the slowness of naive Gibbs translates to fast convergence of the re-parameterized Gibbs chain.

In Section \ref{two parameter system} we motivate our re-parameterization by analyzing a simple two parameter statistical problem.  The concepts are then applied to the Bayesian lensing problem in Section \ref{Section: Ancillary and sufficient parameters for the lensed CMB}. The two conditional distributions in our Gibbs implementation are discussed in Section \ref{Section: hamiltonian sampler section} and Section \ref{Section: iterative message passing section}. We finish with some  simulation examples in Section \ref{Section: simulation examples}.

All the code presented in this paper is written in the language {\em Julia} \citep{bezanson2012julia} and is publicly available through the on-line repository \url{https://github.com/EthanAnderes/BayesianCmbLensing}.

%
%
\section{Weak lensing primer and a Bayesian challenge}
\label{primer}

The effect of weak lensing is to simply remap the CMB, preserving surface brightness.   Up to leading order, the remapping  displacements are given by $\nabla \phi$, where $\phi$ denotes the lensing potential and is the planar projection of the three dimensional gravitational potential (see \cite{dodelson2003modern}, for example). Therefore the lensed CMB can  be written $T(x + \nabla \phi(x))$ where $T(x)$ denotes the unlensed CMB temperature fluctuations and $x$ represents an observational direction on the unit sphere. For this paper we will be focusing on the small angle limit  so that $x$ is assumed to vary in a small patch of $ \Bbb R^2$. However, we do not expect the fast convergence properties of our algorithm to be sensitive to the small angle approximation and the methodology presented here should hold for a full treatment on the sphere.  The lensed CMB is observed with additive noise (denoted $n(x)$) to result in data of the form
\begin{equation}
\label{data model}
 \text{data}(x)= T(x + \nabla \phi(x))+ n(x).
\end{equation}
The goal of weak lensing surveys is to use the data in (\ref{data model}) to  estimate $\phi$, $T$  and possibly  the spectral densities of $T$ and $\phi$.  

A natural approach to develop a Bayesian lensing estimator is to generate posterior samples through a Gibbs algorithm which iteratively samples from the two conditionals: $P(T |  \phi,\text{data})$ and $P(\phi | T,  \text{data})$.
Sampling from $P(T |  \phi,\text{data})$ is simply a Gaussian random field prediction problem since conditioning on $\phi$ models the data as
\[
\text{data}(x) = T(\!\!\!\underbrace{x+\nabla\phi(x)}_\text{\tiny known obs locations}\!\!\!) + n(x).
\]
In other words, the data is a noisy version of  $T$ observed on an irregular grid. 
Conversely, when sampling from $P(\phi | T,\text{data})$ the data is of the form
\[
\text{data}(x) = \underbrace{\!\!T\!\!}_{\text{\tiny known}}(x+\nabla\phi(x)) + n(x). 
\]
To see how one might approximate this conditional notice first that the CMB field $T(x)$ is very smooth. Indeed, Silk damping  predicts  a exponentially decaying power spectrum. Therefore a linear Taylor approximation,  $\text{data}(x) \approx T(x) + \nabla T(x)\cdot \nabla\phi(x) + n(x)$, may be useful. In fact, the derivation of the quadratic estimator explicitly uses this linear approximation. 
If one is willing to use this linear approximation then the conditional $P(\phi | T,\text{data})$ is simply a Bayesian regression problem since $T$ (and thus $\nabla T$) are both known with a Gaussian prior on $\nabla \phi$.

Unfortunately, the structure of both of these conditionals make the Gibbs very slow to converge. The case is exacerbated in the situation when noise level is small. For example, in the second conditional, if $T$ is known and fixed, the extent of the likely $\phi$'s under $P(\phi|T,\text{data})$  is very small compared to the likely $\phi$'s under $P(\phi, T| \text{data})$. 
This suggests a highly dependent posterior $P(\phi, T| \text{data})$.

%
%
\section{Two parameter analogy}
\label{two parameter system}

To motivate our solution to the Bayesian lensing problem we start with a simple two parameter statistical problem.  This system has two unknown parameters $ t, \varphi$ with a single data point given by
\[\text{data} =  t + \varphi + n\]
where $n$ denotes additive noise.  In the Bayesian setting, the posterior distribution is computed as 
\begin{equation}
\label{post1}
 P( t,\varphi|\text{data})\propto P(\text{data}| t, \varphi) P( t,\varphi) 
 \end{equation}
where $P(\text{data}| t, \varphi)$ denotes the likelihood of the data given  $ t, \varphi$ and $P( t,\varphi)$ denotes the  prior on $ t, \varphi$. 
The Gibbs sampler is a widely used algorithm for generating (asymptotic) samples from  $P( t, \varphi|\text{data})$. The algorithm generates a Markov chain of parameter values $( t^{1}, \varphi^{1}), ( t^{2}, \varphi^{2}),\ldots$ generated by iteratively sampling from the conditional distributions:
\begin{align*}
 t^{i+1} &\sim P( t|\varphi^{i},\text{data}) \\
\varphi^{i+1}   &\sim P(\varphi| t^{i+1},\text{data}).
\end{align*}
A useful heuristic for determining the convergence rate of a Gibbs chain is the extent to which the two parameters $ t$ and $\varphi$ are dependent in $P(t, \varphi|\text{data})$. A highly dependent posterior $P( t, \varphi|\text{data})$ leads to a slow Gibbs chain, near independence leads to a fast Gibbs chain. Indeed, exact independence gives a sample of the posterior after one Gibbs step.  A technique for accelerating the convergence of a Gibbs sampler is to find a  re-parameterization of $ t$ and $\varphi$ in a way which makes the posterior less dependent. In the remainder of this section we discuss a specific re-parameterization which, by analogy, can be applied to Bayesian lensing.

The relevant situation for Bayesian lensing is the case that $ t$ and $\varphi$ are highly negatively correlated in $P( t, \varphi|\text{data})$.  This motivates re-parameterizing $( t,\varphi)$ to $(\widetilde  t, \varphi)$ where $\widetilde  t \equiv  t + \varphi$ so that
\begin{align*}
\text{data} &= \widetilde  t + n.
\end{align*}
In the statistics literature,  $( t, \varphi)$ has been referred to as an {\bf ancillary parameterization} whereas $(\widetilde  t, \varphi)$ is referred to as a {\bf sufficient parameterization}.
We note that the terms {\em ancillary} and {\em sufficient } parameterization have been used interchangeably with the nomenclature {\em non-centered} and {\em centered } parameterizations, respectively, in the statistics literature \cite{bernardo2003non,gelfand1995efficient,papaspiliopoulos2008stability,papaspiliopoulos2007general,yu2011center}.  Figure  \ref{fastslowGibbs} illustrates the difference between an ancillary versus sufficient posterior distribution for our simple two parameter model. The left plot shows the posterior density contours for the ancillary parameterization $( t, \varphi)$, along with 20 steps of a Gibbs sampler.  Conversely, the right plot shows the posterior density contours for the sufficient chain $(\widetilde  t, \varphi)$ with 20 Gibbs steps. Notice that negative correlation  in the ancillary parameterization manifests in near independence for the sufficient chain.  Indeed, the slower the ancillary chain the faster the sufficient chain and vice-versa.

\begin{figure*}
\begin{center}
\includegraphics[height=2.0in]{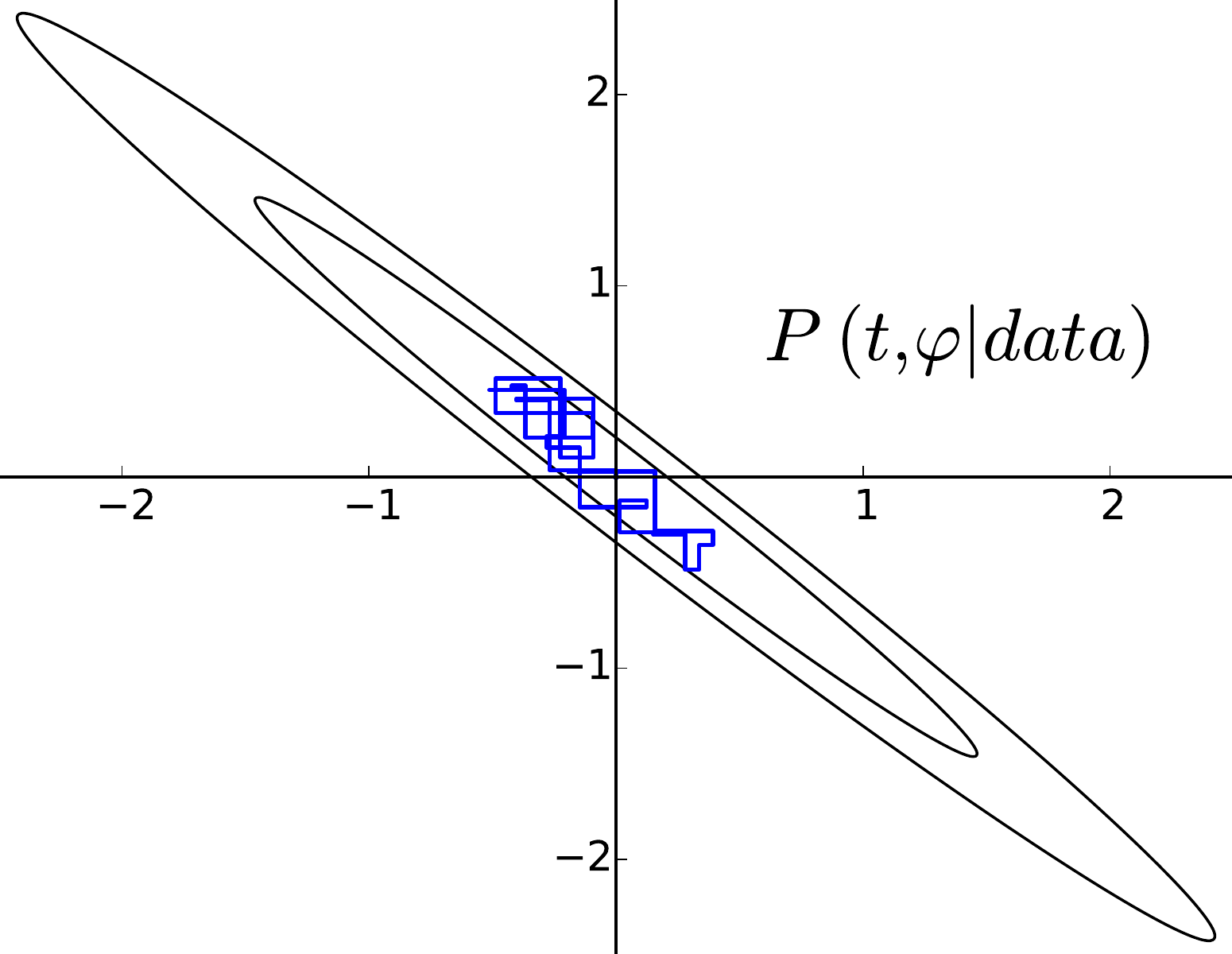}
\includegraphics[height=2.0in]{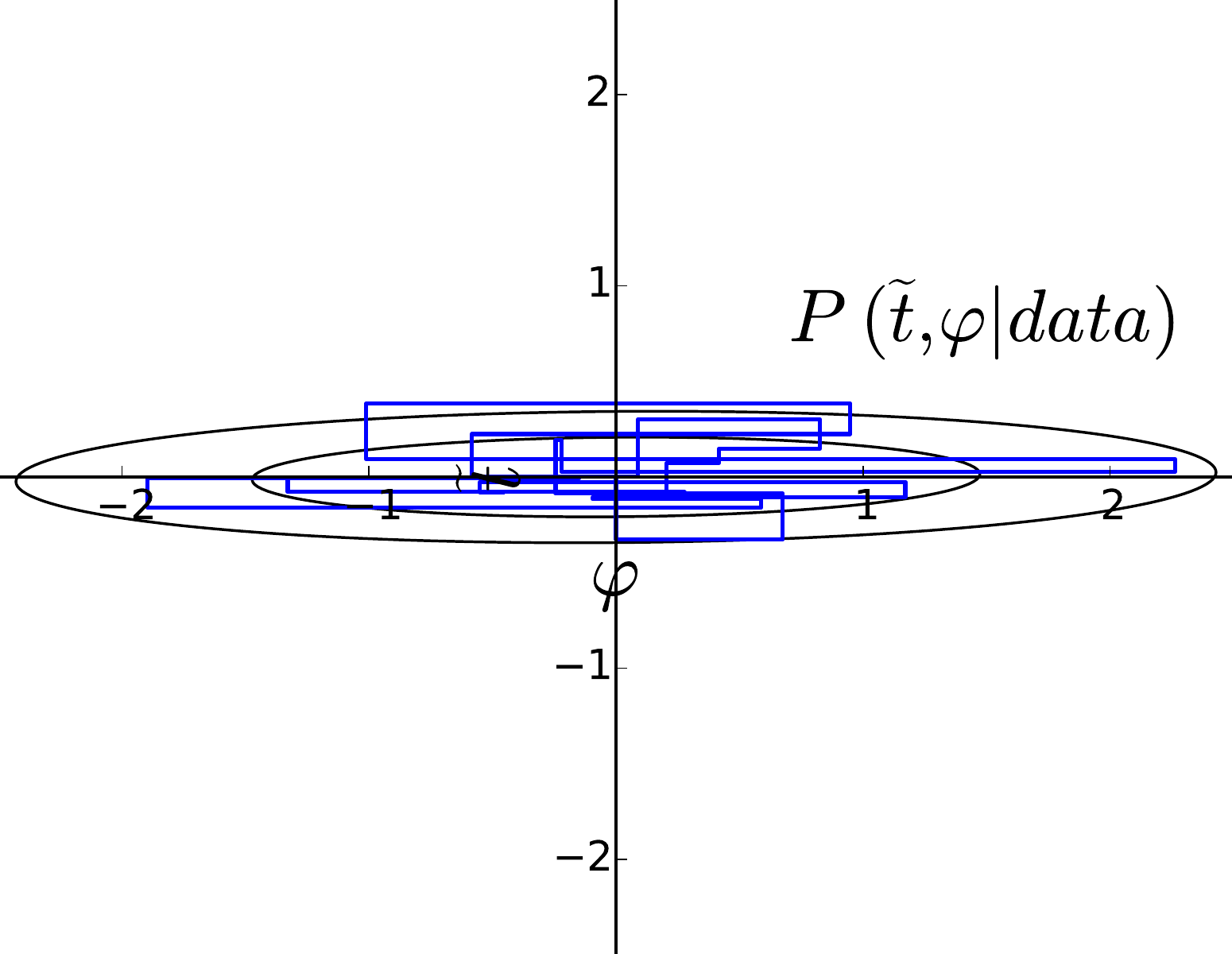}
\end{center}
\caption{\label{fastslowGibbs}
{\em Left:} density contours of the {\bf ancillary} chain $P( t, \varphi|\text{data})$ with 20 steps of a Gibbs sampler. {\em Right:} density contours of the {\bf sufficient}  chain $P(\widetilde  t, \varphi|\text{data})$ with 20 steps of a Gibbs sampler. This illustrates the general heuristic that a slowly converging ancillary chain translates to a quickly converging the sufficient chain.}
\end{figure*}

%
%
\section{Ancillary versus sufficient parameters for the lensed CMB}
\label{Section: Ancillary and sufficient parameters for the lensed CMB}

The ancillary parameterization presented in the previous section is analogous to the lensed CMB problem as follows
\begin{align*}
  \text{data}(x) &= T(x+\nabla \phi(x)) + n(x)\quad\text{\bf\em analogous to}\quad
  \text{data} =  t +\varphi + n
\end{align*}
where the unlensed  CMB temperature field $T$ and the lensing potential $\phi$ are the two unknown parameters. As was discussed in Section \ref{primer} the Gibbs chain based on the ancillary parameters $T(x)$ and $\phi(x)$ is exceedingly slow.  This clearly motivates the following re-parameterization to sufficient parameters for the lensed CMB problem 
\begin{align*}
  \text{data}(x) &= \widetilde T(x) + n(x) \quad\text{\bf\em analogous to}\quad
  \text{data} =  \widetilde t + n
\end{align*}
where now $\widetilde T$ denotes the lensed CMB temperature field with no noise or beam.
The sufficient chain then proceeds as
\begin{align}
\label{suff 1} \widetilde T^{i+1}&\sim P(\widetilde T |  \phi^{i},\text{data}) \\
\label{suff 2} \phi^{i+1}&\sim P(\phi | \widetilde T^{i+1},  \text{data}).
\end{align}
In Section \ref{Section: iterative message passing section} we adapt an iterative message passing algorithm, originally developed in \cite{elsner2013efficient,jasche2014matrix}, for Wiener filtering and sampling from (\ref{suff 1}). In Section \ref{Section: hamiltonian sampler section} we derive a Hamiltonian Markov Chain algorithm to sample from (\ref{suff 2}). Our Hamiltonian Markov Chain algorithm relies on an approximation---motivated again by the two parameter system---we call {\em anti-lensing}.

%
%
\subsection{Anti-lensing approximation}
\label{section: Anti-lensing approximation}

In the two parameter analogy from Section \ref{two parameter system},  the relation between the sufficient parameter $\widetilde t$ and the ancillary parameter $t$ is given by $\widetilde t - \varphi = t$. The corresponding relation for CMB lensing we refer to as {\em anti-lensing}:
\begin{equation}
\label{antilense}
 \widetilde T(\,\underbrace{x-\nabla\phi(x)}_\text{\small anti-lensing}\,) \approx T(x).
\end{equation}
We distinguish between {\em inverse lensing} and  {\em anti-lensing}. Inverse lensing denotes the true coordinate displacement which, when applied to  $\widetilde T$, recovers the unlensed $T$. Conversely, anti-lensing is given by $-\nabla\phi$ and approximates inverse lensing.  

To examine the difference between anti-lensing and inverse lensing notice that an extra divergence-free potential is needed to model the inverse lensing displacement field. 
Indeed, let $f(x):=x + \nabla \phi(x)$ denote the lensing map. With this notation we have
\[
\widetilde T(x) = T(f(x)) \quad\text{ and }\quad T(x) = \widetilde T(f^{-1}(x))
\]
where $f^{-1}$ is the inverse lensing map that satisfies $x = f^{-1}(f(x))$. Now let $d(x)$ denote the displacement vector field for inverse lensing so that $f^{-1}(x) = x + d(x)$. Therefore
$x = f^{-1}(f(x)) = f(x) + d(f(x))$.
In particular $d(f(x)) = x - f(x) =   -\nabla \phi(x)$ which gives
\[ d(x) = -\nabla \phi(f^{-1}(x)). \]
This implies that the inverse lensing displacement is modeled as a warped version of the curl-free vector field $-\nabla \phi$ (warped by $f^{-1}$).  This warping introduces a non-zero divergence-free term (just as lensing adds non-zero B-mode power in the CMB polarization).

To illustrate the expected magnitudes of the divergence-free and curl-free terms, start with a Helmholtz decomposition of the inverse lensing displacement: $d(x) = -\nabla \phi^\text{inv}(x) - \nabla^\perp \psi^\text{inv}(x),$  where $\nabla^\perp \equiv \bigr(-\frac{\partial}{\partial y},\frac{\partial}{\partial x} \bigl)$ and  $\psi^\text{inv}$ denotes a stream function potential which models a field rotation so that
\begin{align*}
\widetilde T\bigl(\,\underbrace{x-\nabla \phi^\text{inv}(x) - \nabla^\perp \psi^\text{inv}(x)}_\text{\small inverse lensing}\,\bigr) &= T(x).
\end{align*}
Due to the fact that the expected size of the inverse lensing displacement $d(x)$ is on the order of arcmin but the correlation length scale of $\phi$ is on the order of degrees we claim that $-\nabla \phi(f^{-1}(x))$ is well approximated by $-\nabla \phi(x)$. In particular, the divergence-free term $- \nabla^\perp \psi^\text{inv}$ is small  and 
\begin{align}
\label{anti approx}
-\nabla\phi &\approx -\nabla\phi^\text{inv} \approx -\nabla\phi^\text{inv}- \nabla^\perp \psi^\text{inv} = d.
\end{align}
Figure \ref{antilensing plots} illustrates the magnitudes of the above terms. 
The anti-lensing potential $-\phi$  is shown (upper-left) along with the corresponding inverse lensing potential $-\phi^\text{inv}$ (upper-right). The difference $\phi - \phi^\text{inv}$ is also shown (bottom left) along with the stream function $-\psi^\text{inv}$ (bottom-right). Clearly, the magnitude of the difference  $\phi^\text{inv}-\phi$ and $-\psi^\text{inv}$ is sub-dominant to estimation error expected in current lensing experimental conditions.

\begin{figure*}
\begin{center}
  \includegraphics[height=2.4in]{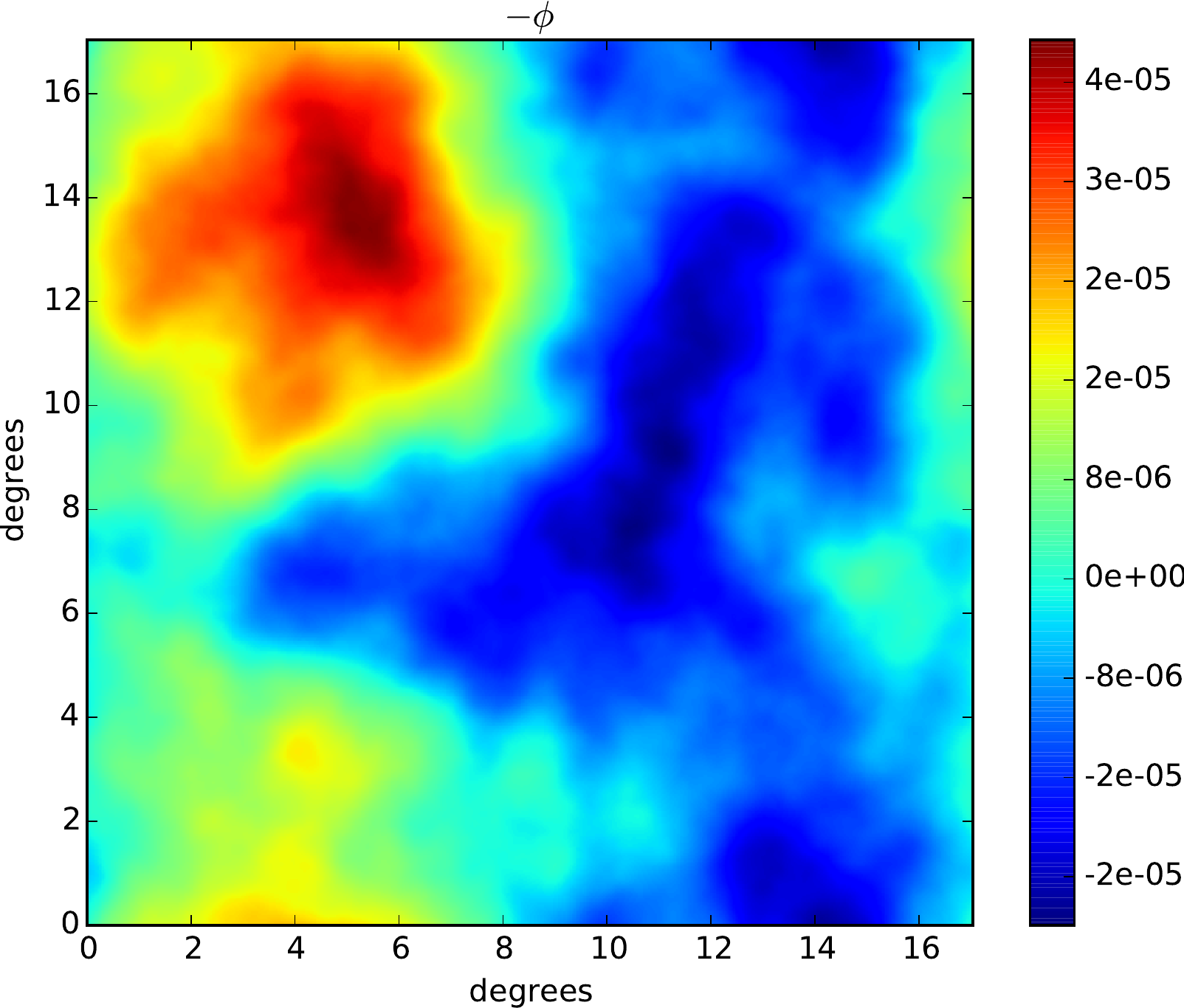}\includegraphics[height=2.4in]{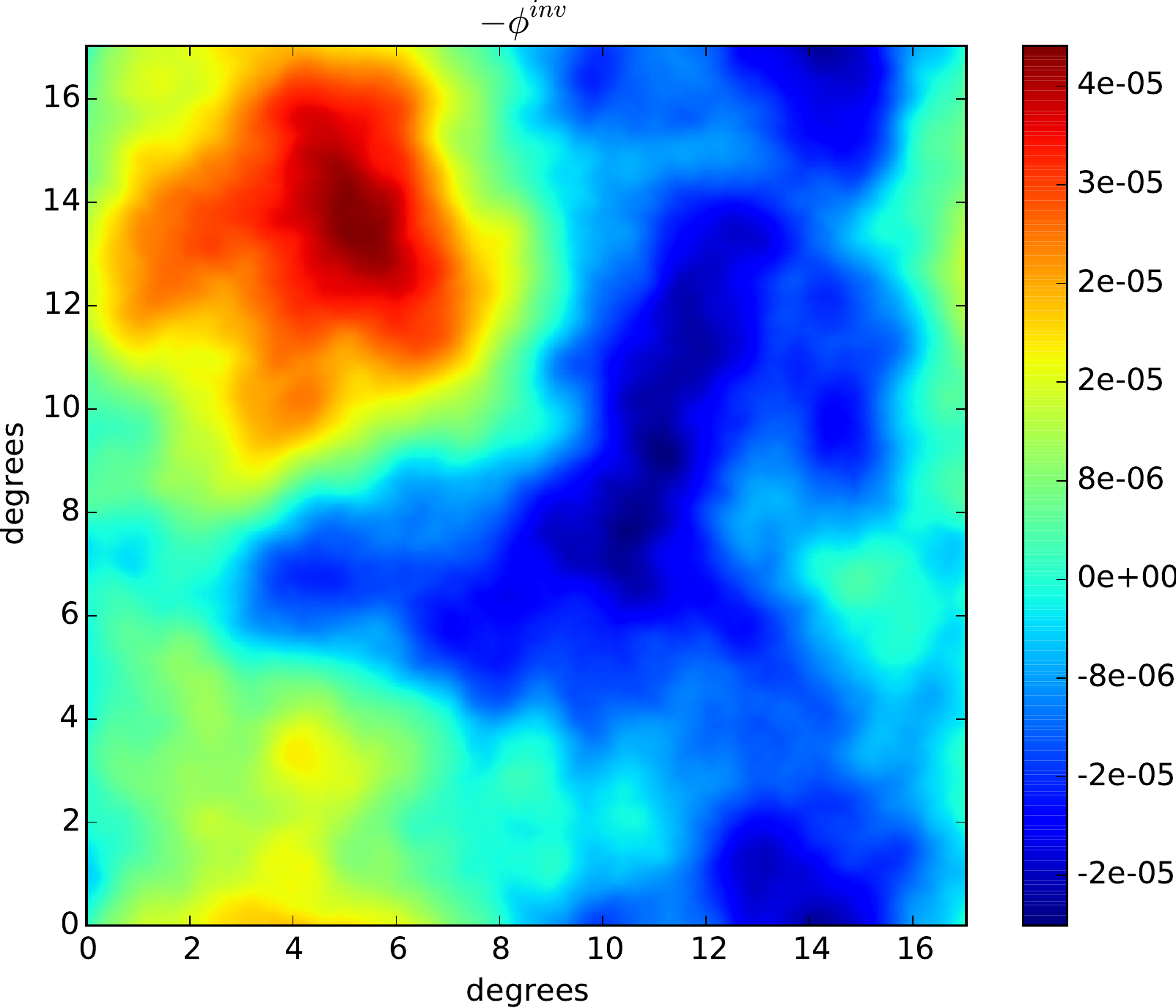}\\%
  \includegraphics[height=2.4in]{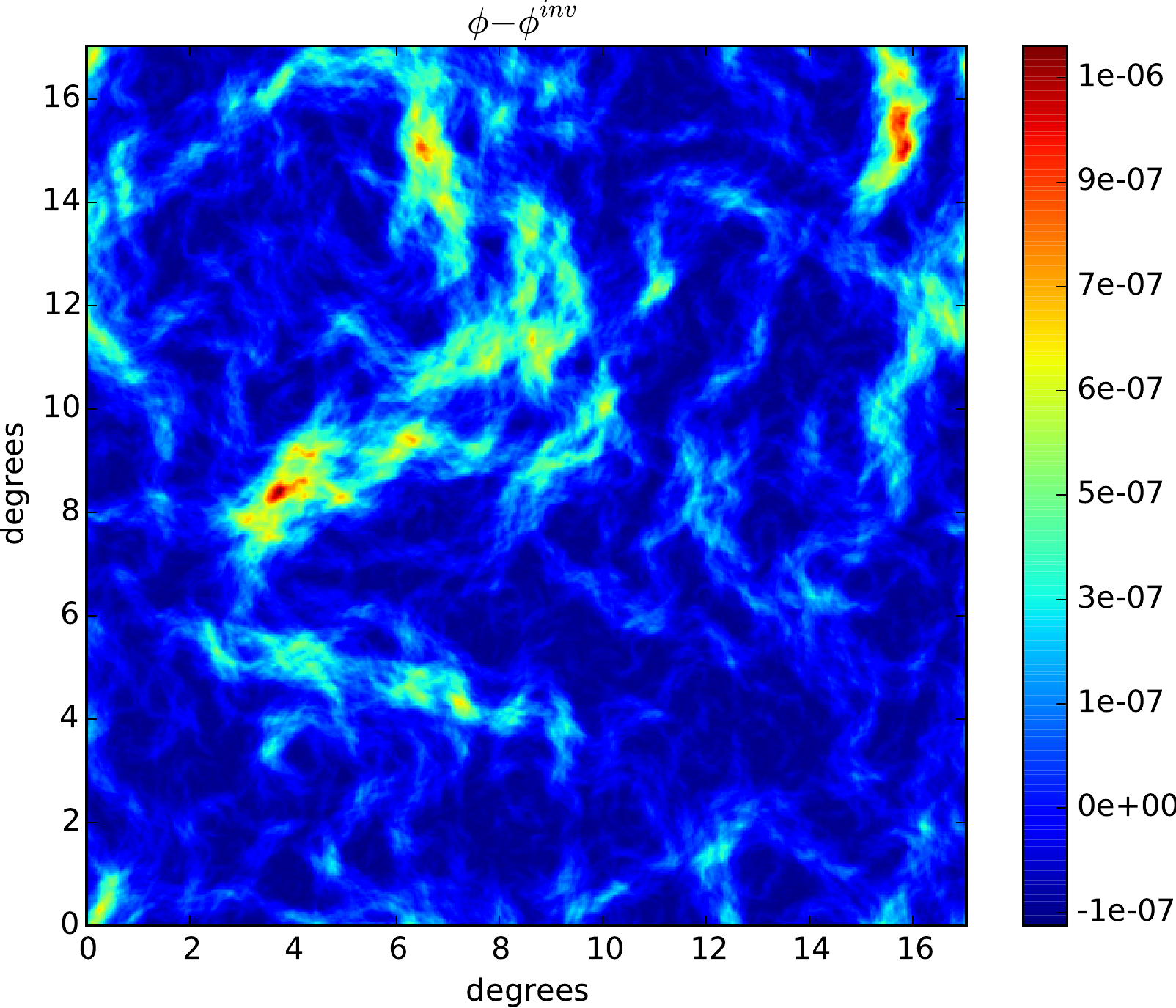}\includegraphics[height=2.4in]{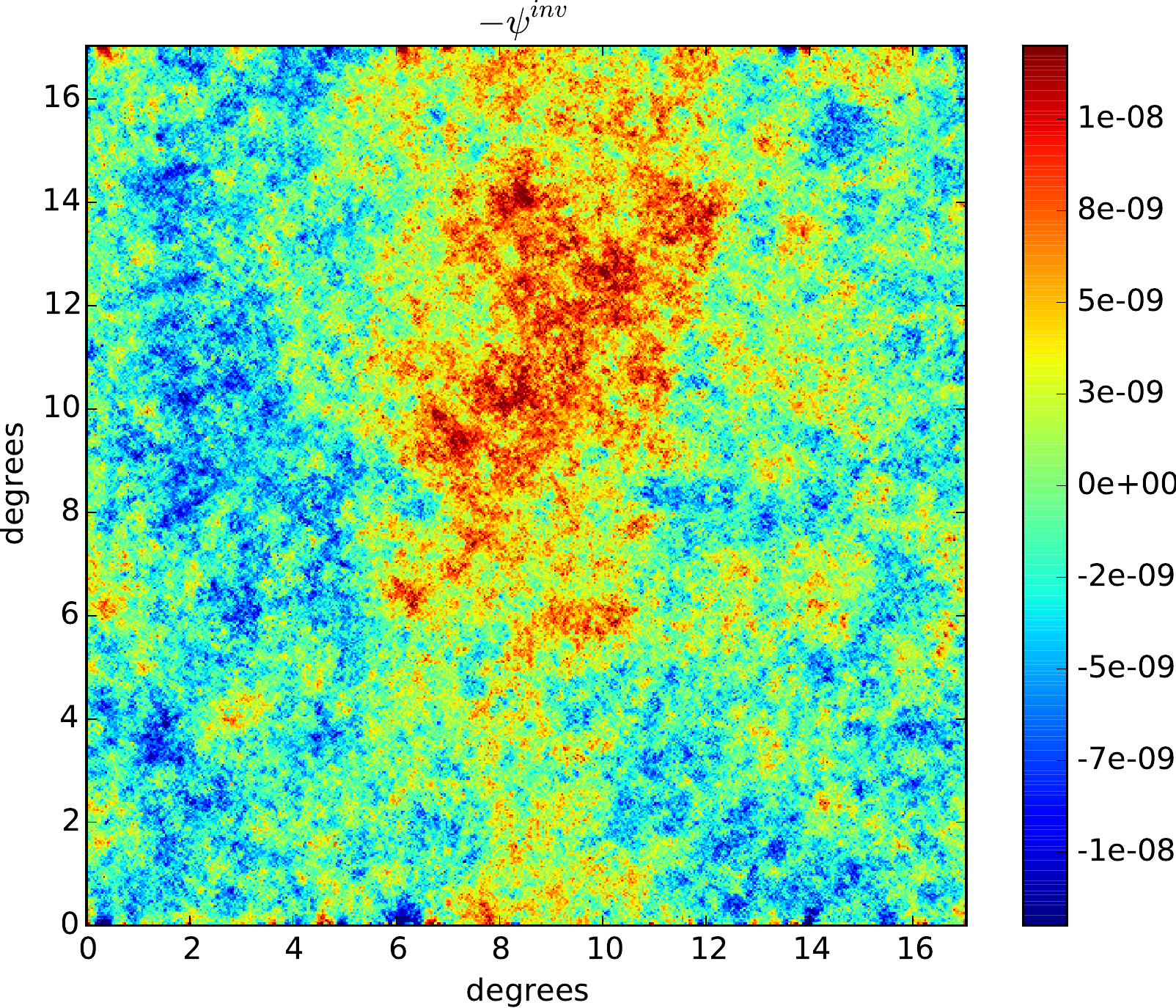}
\end{center}
\caption{\label{antilensing plots}
The difference between anti-lensing and inverse lensing. {\em Upper left:} anti-lensing potential $-\phi$. {\em Upper right:} The inverse lensing potential $-\phi^\text{inv}$. {\em Bottom left:} The difference $\phi^\text{inv}-\phi$. {\em Bottom right:} The inverse lensing stream function $-\psi^\text{inv}$.}
\end{figure*}

%
%
\section{Hamiltonian Monte Carlo sampler for $P(\phi | \widetilde T,  \text{\rm data})$}
\label{Section: hamiltonian sampler section}

The Hamiltonian Monte Carlo (HMC) algorithm is an iterative sampling algorithm designed to mitigate the low-acceptance rate of the Metropolis-Hastings algorithm when working in high dimension. 
A nice review of HMC can be found in \cite{neal2011mcmc}. For applications of HMC in cosmology see \cite{PhysRevD.75.083525, taylor2008fast, elsner2010local, 2010MNRAS.409..355J, 2012MNRAS.425.1042J, jasche2013bayesian, jasche2013methods}. In the present case we utilize the HMC algorithm to produce samples of $\phi$ from $P(\phi | \widetilde T,  \text{\rm data})$.  The key to making HMC work for lensing is to parameterize  $\phi$ in terms of it's Fourier transform. One can then utilize  Claim \ref{grad claim}, presented below, to efficiently compute the gradient of the log conditional density of $P(\phi | \widetilde T,  \text{\rm data})$, which is a necessary computation for the HMC algorithm.

{\em Notation:} Throughout the remainder of this paper, the Fourier transform of any function $f(x)$ will be denoted by $f_l$ or $f_k$ so that $f_l  =  \int_{\Bbb R^2} e^{-i x\cdot l}  f(x)\frac{dx}{2\pi}$ and
$f(x) =  \int_{\Bbb R^2} e^{i x\cdot l}  f_l \frac{dl}{2\pi}$ 
where $l\in \Bbb R^2$ is a two dimensional frequency vector and $x\in \Bbb R^2$  is a two dimensional spatial coordinate.

To describe the HMC algorithm let $\bs \phi$ denote the concatenation of the real and imaginary parts of $\phi_l$ as $l$ ranges through discrete frequencies $l$ ranging up to a pre-specified $|l|_{max}$ (but excluding  half of the Fourier frequencies due to the Hermitian symmetry associated with the Fourier transform of a real field). Note that $\bs \phi$ is a vector of real numbers. 
Let $P(\bs \phi|\widetilde T, \text{data})$ denote the density of $\bs \phi$ given $\widetilde T$ and the data. 
Let $\bs p$ denote a `momentum' vector and $\bs m$ denote a `mass' vector, which are both the same length as $\bs \phi$. The Hamiltonian is a function of $\bs \phi$ and $\bs p$ and is defined as follows
\[ H(\bs \phi, \bs p):= -\log P(\bs \phi|\widetilde T, \text{data})+\sum_k \frac{\bs p_k^2}{2\bs m_k^2}. \]
This Hamiltonian generates a time-dependent evolution of $\bs \phi$ and $\bs p$ given by 
\begin{align*}
\frac{d\bs \phi^t}{dt} &= \phantom{-}\nabla_{\bs p} H(\bs \phi^t, \bs p^t) \\
\frac{d\bs p^t}{dt}    &= -\nabla_{\bs \phi} H(\bs \phi^t, \bs p^t).
\end{align*} 
The HMC is a discrete version of  this time-dynamic equation, using a leapfrog method, which produces a Markov chain $(\bs \phi_1, \bs p_1), (\bs \phi_2, \bs p_2), \ldots$ where the $i^\text{th}$ iteration is given by Algorithm \ref{ith step of HMC} below.

\begin{algorithm}[H]
\small
\caption{ $i^\text{th}$ step of the Hamiltonian Markov Chain}
\label{ith step of HMC}
\begin{algorithmic}[1]
\State Set $\bs \phi^0:= \bs \phi_{i-1}$ and simulate $\bs p^0 \sim \mathcal N(0,\Lambda_m)$ where $\Lambda_m$ is diagonal with  $diag(\Lambda_m)=\bs m$.
\State  Recursively compute $\bs \phi^{k\epsilon}$ and $\bs p^{k\epsilon}$ for $k=1,\ldots, n$ using the following equations:
\begin{align*}
\bs \phi^{t+\epsilon} &:= \bs \phi^{t} + \epsilon \Lambda_m^{-1} \left[ \bs p^t - \frac{\epsilon}{2} \nabla_{\bs \phi} H(\bs \phi^t, \bs p^t) \right], \\
\bs p^{t+\epsilon} &:= \bs p^{t} - \frac{\epsilon}{2}\Bigl[ \nabla_{\bs \phi} H(\bs \phi^t, \bs p^t) + \nabla_{\bs \phi} H(\bs \phi^{t+\epsilon}, \bs p^t)\Bigr].
\end{align*}
\State Simulate $u\sim \mathcal U(0,1)$, and define $p:= \min\left(1,{e^{- H(\bs \phi^{n\epsilon}, \bs p^{n\epsilon})}}/{e^{-H(\bs \phi^0, \bs p^0)}}\right)$.
\State  If $u< p$,  set $\bs \phi_{i}:= \bs \phi^{n\epsilon}$, otherwise set  $\bs \phi_{i}:= \bs \phi_{i-1}$.
\end{algorithmic}
\end{algorithm}

The HMC algorithm is notoriously sensitive to tuning parameters. The prevailing wisdom (see \cite{neal2011mcmc} page 22, for example) is that one should set $\bs m$ to match the reciprocal of the posterior variance of $\bs \phi$. For the simulation presented in Section \ref{Section: simulation examples} we simply set $\bs m^{-1}_l$ to be nearly proportional to $C_l^{\phi\phi}$ with a slight attenuation at low wavenumber. In particular, we set $\bs m_l^{-1} := 2\times 10^2\bigl[\frac{3}{4} +\frac{1}{4}\tanh(\frac{|l| - 1500}{200}) \bigr] C_l^{\phi\phi} \delta_0$ where $\delta_l$ is a discrete dirac in Fourier space.  This choice was motivated by the fact that the high frequency terms $\phi_l$ are not well constrained by the posterior distribution which results in a posterior variance closely matching $C_l^{\phi\phi}$. The remaining parameters of Algorithm \ref{ith step of HMC} are set to $n = 30$ and $\epsilon = 2 \times 10^{-3}u$ where $u$ is a uniform $(0,1)$ random variable sampled anew at each pass of Algorithm \ref{ith step of HMC} (the use of random $\epsilon$ is designed to avoid resonant frequencies, as advocated in \cite{taylor2008fast}).

The key difficulty in using Algorithm \ref{ith step of HMC} is the computation of the $\nabla_{\bs \phi} H(\bs \phi^t, \bs p^t)$, or equivalently the computation of $\nabla_{\bs \phi} \log P(\bs \phi|\widetilde T, \text{data})$. The number of frequencies is extremely large and therefore, any slow computation of the gradients will present a serious bottleneck. 
The follow claim shows that the gradient of the log density of $P(\phi|\widetilde T, \text{data})$, with respect to the Fourier basis of $\phi$, can be computed quickly with Fourier and inverse Fourier transforms. 

\begin{claim}
\label{grad claim}
 Under the anti-lensing approximation (\ref{antilense}) for any nonzero frequency vector $l\equiv (l_1, l_2) \in \Bbb R^2$ 
\begin{equation}
\label{grad claim eq}
 \frac{\partial}{\partial \phi_l}\log P(\phi | \widetilde T,  \text{\rm data}) \propto -  \frac{\phi_l}{C^{\phi\phi}_{l}} -  \sum_{q=1,2} i  l_q \int_{\Bbb R^2} e^{-i x\cdot l} A^q(x)B(x) \frac{dx}{2\pi}  
\end{equation}
 where  $\phi_l = \re \phi_l + i \im \phi_l$, $\frac{\partial}{\partial \phi_l}\equiv \frac{\partial}{\partial\re \phi_l} + i \frac{\partial}{\partial\im\phi_l}$ and 
 \begin{align}
 B_l &\equiv \frac{1}{C_l^{TT}} \int e^{-i x\cdot l}  \widetilde T(x-\nabla \phi(x))\frac{dx}{2\pi} \\ 
 A^q(x) &\equiv \frac{\partial\widetilde T}{\partial x_q}\bigl(x-\nabla \phi(x)\bigr).
 \end{align}
\end{claim}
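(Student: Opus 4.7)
The plan is to compute the gradient of $\log P(\phi|\widetilde T,\text{data})$ directly by writing down an explicit expression for this log-density under the anti-lensing approximation and then differentiating term by term in the Fourier parameterization. First, I would apply Bayes' rule and observe that since $\text{data}=\widetilde T+n$ couples to $\phi$ only through $\widetilde T$, the factor $P(\text{data}|\widetilde T,\phi)$ is $\phi$-independent, leaving
\[
\log P(\phi|\widetilde T,\text{data})=\log P(\widetilde T|\phi)+\log P(\phi)+\text{const}.
\]
The Gaussian prior on $\phi$ with spectrum $C_l^{\phi\phi}$ contributes $-\tfrac12\sum_{l'}|\phi_{l'}|^2/C_{l'}^{\phi\phi}$. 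For the likelihood I would invoke approximation \eqref{antilense}: treating $T(x)\approx\widetilde T(x-\nabla\phi(x))$ and absorbing the slowly varying Jacobian of the change of variables into the additive constant, the Gaussian prior on $T$ with spectrum $C_l^{TT}$ becomes $\log P(\widetilde T|\phi)=-\tfrac12\sum_{l'}|\widetilde T^{\text{anti}}_{l'}|^2/C_{l'}^{TT}+\text{const}$, where $\widetilde T^{\text{anti}}_{l'}:=\int e^{-ix\cdot l'}\widetilde T(x-\nabla\phi(x))\frac{dx}{2\pi}$. Observe that the $B_l$ in the claim is precisely $\widetilde T^{\text{anti}}_l/C_l^{TT}$, and Parseval's identity lets me rewrite this quadratic as $-\tfrac12\int B(x)\widetilde T(x-\nabla\phi(x))dx$.

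Next I would differentiate with respect to $\phi_l$ using the Wirtinger convention $\partial/\partial\phi_l=\partial/\partial\re\phi_l+i\,\partial/\partial\im\phi_l$, respecting the Hermitian constraint $\phi_{-l}=\phi_l^*$. The prior term yields $-\phi_l/C_l^{\phi\phi}$ immediately. For the likelihood I would use the standard quadratic-form identity: the gradient of $\tfrac12\sum|\widetilde T^{\text{anti}}_{l'}|^2/C_{l'}^{TT}$ with respect to $\widetilde T^{\text{anti}}$ is $B$ itself, so by the chain rule the $\phi_l$-derivative reduces to $-\int B(x)\,\partial_{\phi_l}\widetilde T(x-\nabla\phi(x))dx$. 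A further application of the chain rule under the integral yields
\[
\partial_{\phi_l}\widetilde T(x-\nabla\phi(x))=-\sum_q A^q(x)\,\partial_{\phi_l}(\partial_{x_q}\phi)(x).
\]
A direct computation of $\partial_{\phi_l}\phi(x)$, treating $\re\phi_l$ and $\im\phi_l$ as the independent parameters and accounting for the induced perturbation of $\phi_{-l}$, gives $\partial_{\phi_l}\phi(x)\propto e^{-ix\cdot l}$ and hence $\partial_{\phi_l}(\partial_{x_q}\phi)(x)\propto -il_q e^{-ix\cdot l}$. Substituting and collecting signs produces exactly the integral $-\sum_q il_q\int e^{-ix\cdot l}A^q(x)B(x)\frac{dx}{2\pi}$, modulo an overall normalization already absorbed into the $\propto$ of \eqref{grad claim eq}.

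The principal obstacle I expect is the careful bookkeeping of the complex derivative: one must fix which half of the Fourier modes are the independent parameters, track how perturbing $\phi_l$ drags $\phi_{-l}=\phi_l^*$ along, and verify that all factors of two and signs from the Wirtinger convention combine into the claimed expression rather than an off-by-a-sign or off-by-a-factor variant. A secondary subtlety is the justification for dropping the Jacobian of $x\mapsto x-\nabla\phi(x)$ when converting the Gaussian log-prior on $T$ into a log-likelihood for $\widetilde T$; this is legitimate under the anti-lensing approximation (which already discards the divergence-free and warped pieces analyzed in Section~\ref{section: Anti-lensing approximation}), but it deserves an explicit remark rather than being slipped silently into the additive constant.
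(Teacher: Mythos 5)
Your proposal is correct and reaches the right answer, but it executes the key computation in a different place than the paper does. The paper works entirely in Fourier space: it computes $\partial \widetilde T^\phi_k/\partial\phi_l$ and $\partial \widetilde T^\phi_k/\partial\phi_l^*$ as shifted Fourier transforms of $(\nabla^q\widetilde T)^\phi$ (Lemma \ref{partialconj}), needs a separate reality lemma (Lemma \ref{forreal}) to justify dropping the $\re$ operations before assembling the Wirtinger derivative $\partial_{\re\phi_l}+i\,\partial_{\im\phi_l}$, and only converts to position space at the very last step via the convolution theorem (Lemma \ref{conv}). You instead apply Parseval up front to write the quadratic form as $-\tfrac12\int B(x)\,\widetilde T(x-\nabla\phi(x))\,dx$ and differentiate under the integral sign in position space, where the chain rule $\partial_{\phi_l}\widetilde T(x-\nabla\phi(x))=-\sum_q A^q(x)\,\partial_{\phi_l}\partial_{x_q}\phi(x)$ and the identity $\partial_{\phi_l}\partial_{x_q}\phi(x)\propto -il_q e^{-ix\cdot l}$ (which you correctly derive by tracking the induced perturbation of $\phi_{-l}=\phi_l^*$ --- this is exactly where the paper's computation (\ref{partial1})--(\ref{partial2}) lives) deliver the result directly. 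Your route is the more economical one: the reality of the intermediate quantities is manifest because $B(x)$ and $\widetilde T^\phi(x)$ are real fields, so all three of the paper's supporting lemmas collapse into Parseval plus differentiation under the integral; the cost is that the normalization constants ($dk$, factors of $2$ and $2\pi$) are swept into the $\propto$ rather than tracked explicitly as the paper does. On your final remark about the Jacobian of $x\mapsto x-\nabla\phi(x)$: the paper does address this, just outside the proof environment --- the displayed argument preceding Claim \ref{grad claim} represents anti-lensing as an infinitesimal permutation so that $\det(d\widetilde T^\phi/d\widetilde T)=1$ exactly, rather than absorbing a ``slowly varying'' Jacobian into the constant as you suggest.
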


An important fact used in the derivation of (\ref{grad claim eq}) is that the lensing and anti-lensing operator is invertible. For example, if $\phi(x)$ and $\widetilde T(x)$ are known at all pixel locations $x$, then it is possible to perfectly reconstruct $T(x)$. This implies that the anti-lensing operation (which is a linear action on the CMB) can be represented as an infinitesimal permutation matrix. Therefore, the determinant of the anti-lensing operator $\det( d \widetilde T^\phi / d\widetilde T)$ equals $1$, where $\widetilde T^\phi(x)\equiv \widetilde T(x-\nabla \phi(x))$.   
 Now, to compute the likelihood surface $P(\phi | \widetilde T,  \text{\rm data})$ as a function of $\phi$ we obtain the following formula:
\begin{align*}
P(\phi | \widetilde T,  \text{\rm data})  
&= P(\phi | \widetilde T)  \\
&\propto  P(\widetilde T | \phi)P(\phi) \\
&=  \underbrace{|\det( d \widetilde T^\phi / d\widetilde T)|}_{=1} P(\widetilde T^{\phi} | \phi)P(\phi)
\end{align*}
where $P(\widetilde T^\phi|\phi)$ represents the likelihood that $\widetilde T^{\phi}$ is statistically unlensed by $\phi$. In other words, $P(\widetilde T^\phi|\phi)$ measures the likelihood that $\widetilde T^{\phi}$ is an  isotropic Gaussian random field with spectral density $C_l^{TT}$. This explains the following characterization of the log likelihood of $\phi$ given $\widetilde T$ and the data:
\begin{align}
\log P(\phi | \widetilde T,  \text{\rm data}) =  c - \frac{1}{2}\int_{\Bbb R^2}\left[ \frac{\bigl|\widetilde T_k^\phi\bigr|^2}{C_k^{TT}} + \frac{|\phi_k|^2}{C_k^{\phi\phi}} \right] dk 
\end{align}
where $c$ is a constant which does not depend on $\phi$.  The remaining details of the derivation of Claim \ref{grad claim} is left to the appendix.

It is instructive to compare the gradient calculation (\ref{grad claim eq}) with the quadratic estimate of $\phi$ 
 developed in \cite{hu2001mapping,hu2002mass}. The quadratic estimate, applied to observations of the form $ \widetilde T(x) + n(x)$, is given by 
 \begin{equation}
\label{quad estimate}
 \hat \phi_l = - N_l  \sum_{q=1,2} i  l_q \int_{\Bbb R^2} e^{-i x\cdot l} A^q(x)B(x) \frac{dx}{2\pi}  
\end{equation}
where  $B_l  \equiv \bigl[C_l^{\widetilde T\widetilde T}+ C_l^{nn}\bigr]^{-1}\bigl[\widetilde T_l + n_l \bigr]$, $A^q_l\equiv il_q\bigl[ C_l^{TT}\bigr]\bigl[C_l^{\widetilde T\widetilde T} + C_l^{nn}\bigr]^{-1} \bigl[\widetilde T_l + n_l \bigr]$ and 
\[
N_l^{-1} \equiv \frac{1}{2} \int_{\Bbb R^2}  \frac{\bigl(l\cdot(k+l) C^{TT}_{k+l} - l\cdot k C^{TT}_{k} \bigr)^2}{\bigl(C^{\widetilde T\widetilde T}_{k+l}+ C_{k+l}^{nn} \bigr)\bigl(  C^{\widetilde T\widetilde T}_{k}+ C_{k}^{nn} \bigr)}\frac{dk}{(2\pi)^2}.
\] 
The term $N_l$ is radially symmetric in frequency $l$ and corresponds to a normalization which makes the quadratic estimate unbiased up to first order. After substituting $ C_l^{\widetilde T\widetilde T} + C_l^{nn}\rightarrow C_l^{TT}$ and $n_l \rightarrow 0$ in the formula for the quadratic estimate, one obtains 
\[
 \frac{\partial}{\partial \phi_l}\log P(\phi | \widetilde T,  \text{\rm data})\bigr|_{\phi = 0} = \frac{\hat \phi_l}{N_l}.
\]
Indeed, an approximate Newton step, using (\ref{grad claim eq}), is an accurate approximation to the quadratic estimate $\hat \phi_l$.  This illustrates how the parameterization $(\widetilde T, \phi)$ results in Gibbs iterations which make drastic moves, on the order of the size of the quadratic estimate. 

One of the features of the quadratic estimate is that the fast Fourier transform (FFT) and inverse fast Fourier transform (IFFT) can be used to compute $\hat \phi_l$ for all frequencies $l$. Naively computing the quadratic form of the quadratic estimate requires $O(n^2)$ flops, rather than the $O(n\log n)$ flops obtained by the FFT/IFFT method, where $n$ denotes the number of pixels. We note that Claim \ref{grad claim} establishes that the gradient computation inherits a similar FFT/IFFT characterization to compute $ \frac{\partial}{\partial \phi_l}\log P(\phi | \widetilde T,  \text{\rm data})$ at all frequencies $l$, in $O(n\log n)$ flops. Since this gradient computation needs to be embedded in a  Hamiltonian Markov step within a Gibbs Chain, the computation efficiency gained by the FFT/IFFT is absolutely crucial.

%
%
\section{Iterative message passing algorithm for $ P(\widetilde T |  \phi,\text{\rm data})$}
\label{Section: iterative message passing section}

There are two natural ways to model the lensed CMB $\widetilde T$. If one marginalizes out $\phi$, then $\widetilde T$ is modeled as a {\em non-Gaussian} but isotropic random field. Conversely, if one conditions on $\phi$ the field $\widetilde T$ is modeled as a {\em non-isotropic} but Gaussian random field. The latter case is relevant for sampling from $ P(\widetilde T |  \phi,\text{\rm data})$ which  is, therefore,  simply a Gaussian conditional simulation problem. Unfortunately, the non-isotropic (indeed, non-stationary) nature of the conditional distribution of $\widetilde T$ presents serious computational challenges. In what follows we utilize a new iterative algorithm developed in \cite{elsner2013efficient} for Gaussian conditional expectation when the signal is diagonalized in harmonic space that the noise is diagonalized in pixel space. The method we present here is similar to the Gibbs sampling adaptation of \cite{jasche2014matrix}.

Start by transforming each pixel location $x$ by the lensing operation $x+\nabla \phi(x)$, while simultaneously preserving the data associated with that pixel. This effectively de-lenses  $\text{data}(x)= T(x+\nabla \phi(x))+n(x)$ but produces observations on an irregular grid. In particular, one may switch to the lensed coordinates $y = x+\nabla\phi(x)$ so that
\begin{align}
\nonumber
\underbrace{(x +\nabla \phi(x),\, \text{ data}(x))}_{\text{(pixel, data) tuple}} & = (y, \, T(y) + \tilde n(y)) 
\end{align}
where $\tilde n(x+\nabla\phi(x))=n(x)$.  Now the data $(y, \, T(y) + \tilde n(y))$ is arranged on an irregular grid in $y$. This irregular grid is then embedded into a high resolution regular grid by nearest neighbor interpolation. The points $y$ which do not get assigned an observation $T(y)+\tilde n(y)$ under the interpolation we consider to be masked. 
Figure \ref{embed} illustrates this situation. The left hand plot shows the irregularly sampled data $(x +\nabla \phi(x),\, \text{ data}(x))$ and the right hand plot shows the grid embedding. The filled dots represent observations of $T(y) + \tilde n(y)$ whereas the empty dots correspond to a masked observation of $T(y)$.  Finally we extend the definition of $\tilde n(y)$ to have infinite variance over the masked region, whereby producing data $T(y)+\tilde n(y)$ over a dense regular grid in $y$.

\begin{figure*}
\begin{center}
{\includegraphics[height=2.5in]{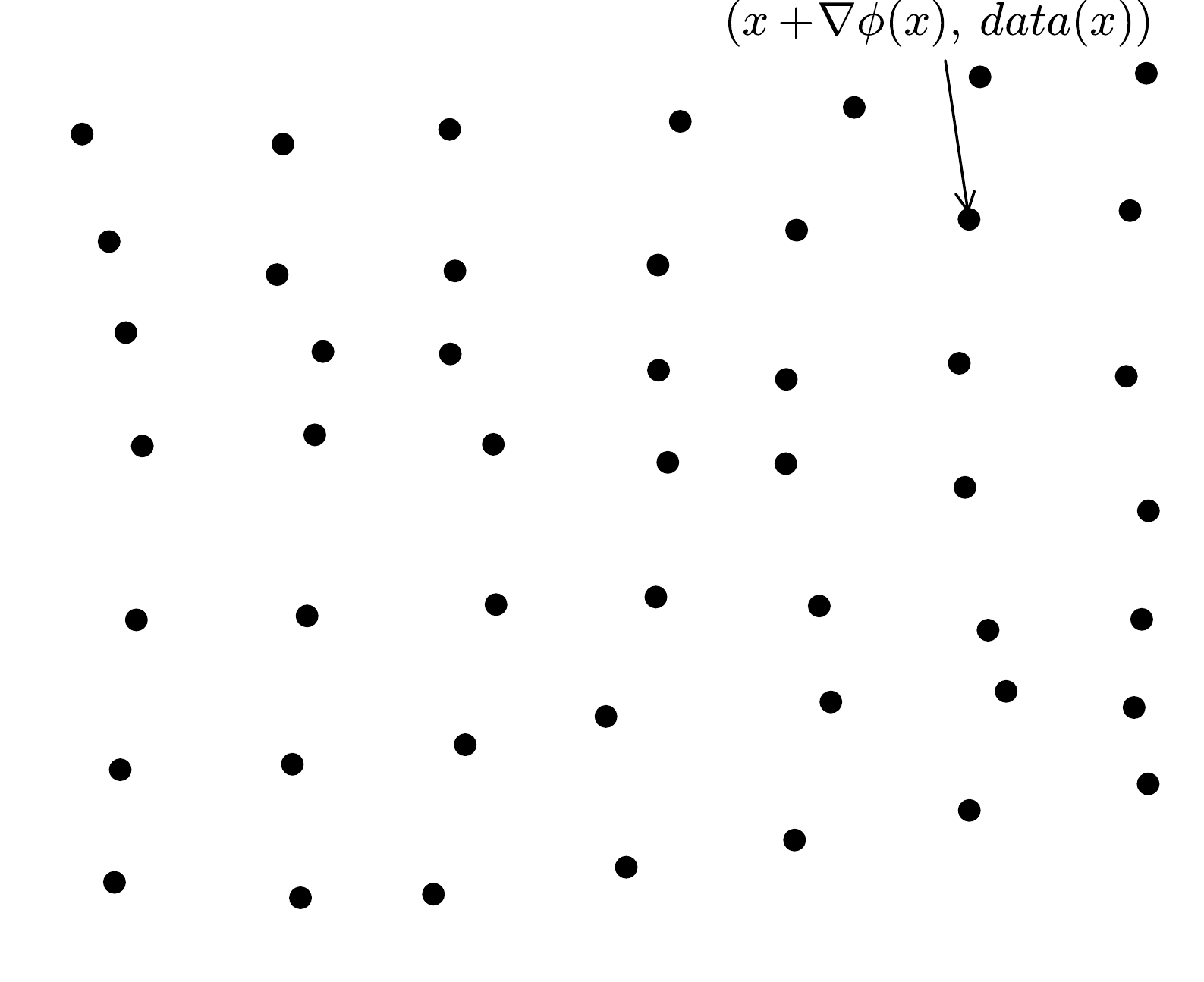}}%
{\includegraphics[height=2.6in]{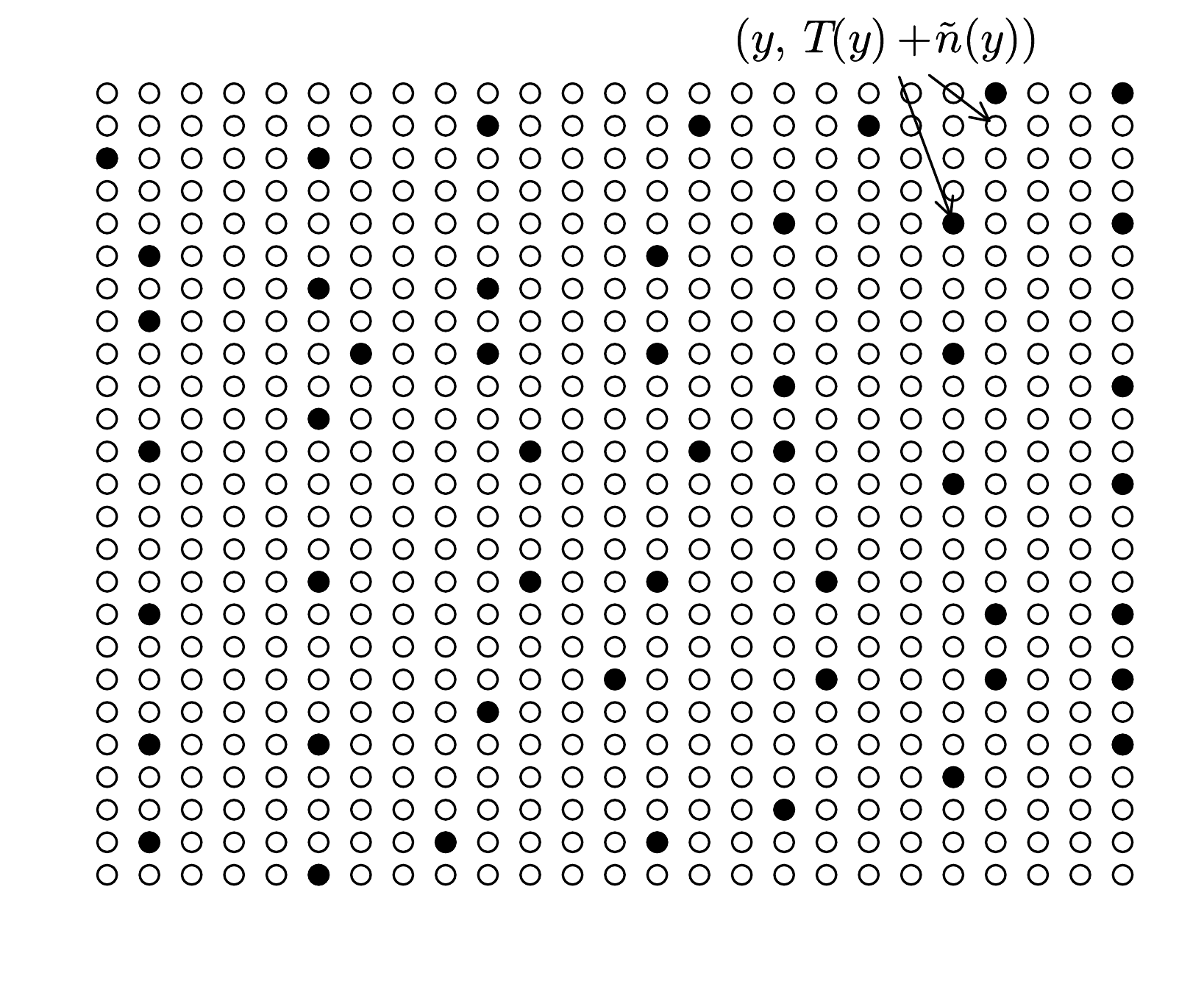}}
\end{center}
 \caption{
\label{embed}
 This graphic illustrates how knowledge of $\phi(x)$, used when sampling from the Gibbs step $P(\widetilde T|\phi,\text{data})$, converts white noise corrupted gridded observations of the lensed CMB into masked observations of the unlensed CMB on a more dense grid.  The data associated with the original grid, indexed by $x$, is moved via advection to the lensed grid $x+\nabla \phi(x)$ seen at left. The right panel shows the lensed grid embedded into a higher resolution grid. The data, indexed by the dense regular grid on the right panel, is of the form $T(y)+\tilde n(y)$ where $T(y)$ denotes the unlensed CMB and $\tilde n(y)$ denotes the noise. The unobserved locations, represented by open circles, are characterized with an infinite variance for  $\tilde n(y)$. 
} 
\end{figure*}

As a intermediate step in producing a sample from  $P(\widetilde T |  \phi,\text{\rm data})$ we produce a conditional  sample of $T(y)$ given the observations $T(y) + \tilde n(y)$. The difficulty of this step is that $\tilde n(y)$ is non-homogeneous noise---from the masking and any inhomogeneity in $n(x)$---and therefore it is not decorrelated by the Fourier transform. To handle this situation we adapted a new method for Gaussian conditional expectation developed in \cite{elsner2013efficient}. This method works particularly well for observations with large amounts of irregular masking, as in our case. The algorithm utilizes a messenger field which effectively behaves as a latent---signal plus white noise---model which is amenable to Gibbs sampling \citep{jasche2014matrix}.

The delensing algorithm described in this paper requires the capability to do fast constrained realization of non-lensed CMB. The embedding illustrated in Figure \ref{embed} means that each constrained realization needs to be computed on a mask with a great deal of structure on the scale of the pixels of dense grid. Several algorithms exist in the literature to solve  the general problem of constrained Gaussian random field on a given mask and power spectrum: the conjugate gradient method \citep{wandelt2004cg,eriksen2004cg}, the multiscale conjugate gradient method \citep{smith2007mcg}, the multigrid method \citep{seljebotn2014mg}, the Messenger algorithm \citep{elsner2013efficient} and its variant the Gibbs-Messenger \citep{jasche2014matrix}. 

Since the lensing potential changes from each iteration to the next, every constrained realization of non-lensed CMB needs to be computed for a different set of active points in the embedding grid. This effectively means that the solution is done for a different mask at every iteration.  This rules out linear solvers that require expensive pre-computations, e.g. of pre-conditioners, that depend on the coefficient matrix of the system since that depends on the mask. We also require  exact acceptance and higher speed than direct or standard iterative methods. This reduces the possibilities to either the Messenger algorithm or the Gibbs-Messenger.  

The Gibbs-Messenger generates very fast constrained realizations that converge to the correct distribution in a statistical sense without iterating to a numerical solution (thus obviating the need to specify a colling schedule in Algorithm~\ref{IMP}) for the price of losing independence between subsequent samples. In contrast the Messenger algorithm simulates independent constrained realisations, but requires iteration of the linear system. To ensure a numerically accurate solution implies a conservative choice of cooling schedule  in Algorithm~\ref{IMP} (though this is still much faster than the alternatives described in the previous paragraph). 

We adopt a hybrid approach where we occasionally generate an independent sample  using the Messenger algorithm, and then generate many quick samples using the Gibbs-Messenger approach. The detailed choices for the cooling schedule and the number of samples between  full Messenger solutions will be described in Section \ref{Section: simulation examples}.

\begin{algorithm}[H]
\small
\caption{Algorithm for sampling from $P(T | \text{data})$ where $\text{data}(y) =  T(y) + \tilde n(y)$}
\label{IMP}
\begin{algorithmic}[1]
\State Set cooling schedule $\lambda_1\geq  \ldots \geq \lambda_n$ where $\lambda_n = 1$. 
\State Decompose $\tilde n(y)$ into a homogeneous part with variance
$\bar\sigma^2$ and a non-homogeneous part with variance $\tilde\sigma^{2}(y)$ so that 
\begin{quote}
$\text{var}(\tilde n(y)) = \bar\sigma^2 + \tilde\sigma^{2}(y)$ 
\end{quote}
where $\tilde \sigma^2(y) = \infty$ on all masked pixels $y$. Notice that the spectral density of the homogeneous part is given by $ \bar\sigma^2 dy$ where  $dy$ denotes the pixel grid area. 
\State Initialize the fields $M(y)$ and $T(y)$  to be zero at all pixel locations $y$.
\State Recursively update fields $M(y)$ and $T(y)$ by iterating the following steps for $j = 1, \ldots, n$:
\begin{quote}
\begin{itemize}
\item[$\bullet$] Simulate a mean zero Gaussian random field $Z(y)$ which is independent across pixels and with pointwise variance $\bigl(\frac{1}{\lambda_j\bar\sigma^{2}} + \frac{1}{\tilde\sigma^{2}(y)}\bigr)^{-1}$.
\item[$\bullet$] Update 
$M(y) \leftarrow \text{data}(y)\displaystyle\frac{\lambda_j\bar\sigma^{2}}{\lambda_j \bar\sigma^2 + \tilde \sigma^2(y)} + T(y)\frac{\tilde\sigma^{2}(y)}{\lambda_j\bar\sigma^2 + \tilde \sigma^2(y)}  + Z(y) $.
\item[$\bullet$] Simulate a mean zero Gaussian random field, $W(y)$, with spectral density
$\langle W_l W_{l^\prime}^*\rangle = \delta_{l-l^\prime}\bigl( \frac{1}{C^{TT}_l} + \frac{1}{\lambda_j \bar\sigma^2 dy} \bigr)^{-1}$
\item[$\bullet$] Update  $T_l \leftarrow  M_l\displaystyle\frac{C^{TT}_l}{C^{TT}_l + \lambda_j   \bar\sigma^2 dy} + W_l$. 
\end{itemize}
\end{quote}
\State Return $T(x)$.
\end{algorithmic}
\end{algorithm}

The following algorithm describes the use of Algorithm~\ref{IMP} to produce a sample from $ P(\widetilde T |  \phi,\text{\rm data})$.

\begin{algorithm}[H]
\small
\caption{Sampling from $P(\widetilde T |  \phi,\text{\rm data})$}
\label{after HMC}
\begin{algorithmic}[1]
\State Embedded the  pixel/data pairs  $(x, \text{data}(x))$ into observations of the form $(y, T(y)+\tilde n(y))$ where $y$ ranges over a high resolution regular grid as illustrated in the right plot of Figure \ref{embed}.
\State Use Algorithm \ref{IMP} to produce a sample $T\sim P(T\,|\, T+\tilde n)$.
\State Return $\widetilde T(x) = T(x+\nabla\phi(x))$.
\end{algorithmic}
\end{algorithm}

At present, algorithms \ref{IMP} and \ref{after HMC} are designed for the situation that the pixels are sufficiently small compared to the magnitude of $\nabla \phi(x)$ and the noise is approximately white on these scales. 
Indeed, our goal is to explore the low noise and small beam experimental conditions where the quadratic estimate is known to be suboptimal (see \cite{HirataSeljak1, HirataSeljak2}). 
That being said, the only change needed to incorporate other experimental details in the Bayesian lensing methodology presented here, including foreground contaminants, is how one samples from $P(\widetilde T |  \phi,\text{\rm data})$. Algorithms \ref{IMP} and \ref{after HMC} take advantage of the special lensed-grid structure of the data when conditioning on $\phi$ to accomplish this goal. Adding different/new experimental details to the data will still result in a Gaussian constrained realization problem. We acknowledged that more complicated modeling of the data will most certainly introduce additional computational challenges. However, we consider these computational challenges to be sub-dominant to the fundamental bottleneck for Bayesian lensing which was the extremely slow mixing time of the original Gibbs formulation. Moreover, the structure of the new Gibbs formulation isolates all experimental details to the Gibbs step (\ref{suff 2}) where conditioning on the lensing potential $\phi$ results in a classic Gaussian constrained realization problem.

%
%
\section{Simulation example}
\label{Section: simulation examples}

In this section we present a simulation to illustrate the methodology presented above. The simulated lensing potential used in this section, shown at left in Figure \ref{phix fig}, is generated on a flat sky with periodic boundary conditions. The data, shown upper-left in Figure \ref{tilde fig}, is generated on 2 arcmin pixels with independent additive noise and masking. The noise level is set to $8.0$ $\mu K$ arcmin and the masking covers approximately 10\% of the pixels.  The parameters of the Bayesian lensing procedure are the Fourier modes of $\widetilde T$ and $\phi$. For the lensing potential we set $|l|_\text{max}$ to $~460$. 
For this sky coverage the scale-resolution in Fourier space  $\Delta l = 21$ yields $1500$ unknown Fourier coefficients for $\phi_l$. The $|l|_\text{max}$ of $2700$ for the unlensed temperature $T$ is set in Algorithm \ref{IMP} and corresponds to half of the Nyquist limit at $2$ arcmin pixels.

\begin{figure}
\begin{center}
  {\includegraphics[width=.5\hsize]{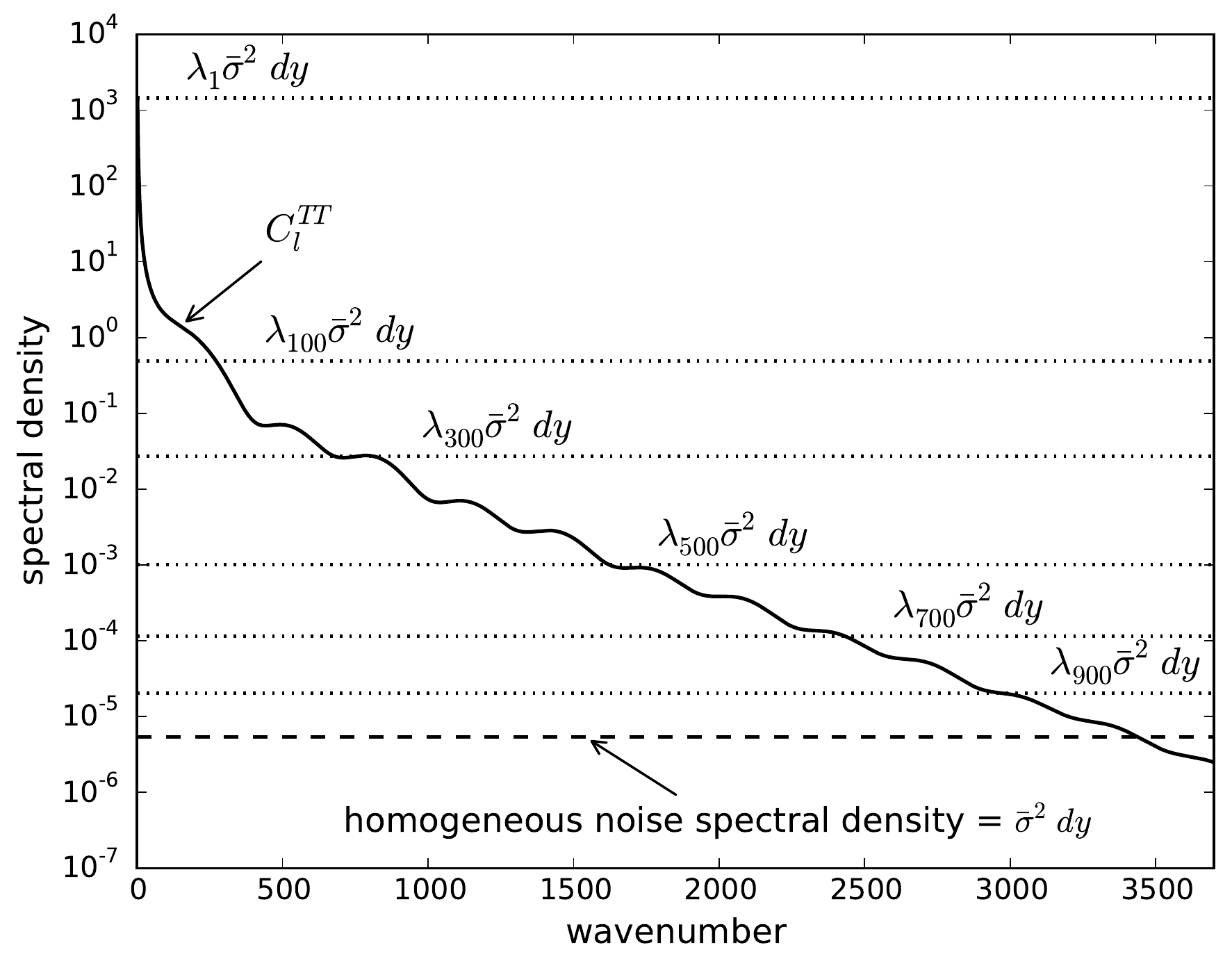}}
\end{center}
\caption{
\label{cooling}
This figure shows values of the cooling parameter $\lambda_j$ for $j=1,100, 300, 500, 700, 900$ used in Algorithm \ref{IMP}.  This schedule is applied every $100^\text{th}$ step in the Gibbs algorithm.  In Algorithm $\ref{IMP}$, the value of $\lambda_j\bar{\sigma}^2dy$ serves as the spectral density of artificial additive white noise in the latent field $M(x)$. Therefore setting $\lambda_j$ greater than $1$,  encourages fast mixing of $T_l$ at all frequency vectors $l$ such that $C_l^{TT} \lesssim \lambda_j\bar{\sigma}^2dy$. The cooling schedule shown above is an attempt to let $\lambda_j$ approach $1$ in such a way as to encourage all frequency vectors up to $|l|_\text{max}$ to mix quickly. 
 }
\end{figure}

We ran $10$ parallel Gibbs chains for a total of $2500$ steps. The timings for each Gibbs iteration averaged approximately 200 seconds using a Dual Intel Xeon E5-2690 2.90GHz processor. Each chain was initially warmed up by replacing the HMC draws in the first $5$ iterations with a gradient ascent. A burn-in of approximately $550$ runs were discarded and the remaining runs were thinned by $100$. The result is a total of $200$ posterior samples. The cooling schedule for the iterative message passing algorithm was selected by numerical experimentation.  Most of the Gibbs iterations set the cooling terms $(\lambda_1,\ldots, \lambda_{400}) \equiv (1,\ldots, 1)$  in Algorithm \ref{IMP}. However, we did find it advantageous to periodically run a nontrivial $1000$-step cooling schedule every $100^\text{th}$ pass of the Gibbs algorithm. This nontrivial cooling schedule for $\lambda_j$ is plotted in Figure \ref{cooling} and is set in an attempt to encourage fast mixing of the Fourier modes $T_l$ up to $|l|_\text{max}$.

The best Bayesian estimate of $\phi(x)$ corresponds to the posterior mean $E(\phi(x) |\text{data})$. This quantity is approximated by the average of the $200$ draws from the Gibbs chain and is shown in the middle plot of Figure \ref{phix fig}.  The right plot of Figure \ref{phix fig} shows the quadratic estimate of $\phi(x)$ for comparison. However, due to the difficulty when using the quadratic estimate in the presence of sky cuts, the quadratic estimate shown uses all of the data---including the pixels which are masked---in producing the estimate of $\phi$. In general, one can see good agreement with $E(\phi(x) | \text{data})$  and $\phi(x)$. Indeed, the effect of masking is visually undetectable as compared to the quadratic estimate.  To get a better visualization of the individual draws from the posterior, the left plot in Figure \ref{slice fig} shows a horizontal cross section of the posterior draws of $\phi(x)$ taken at vertical degree mark $12.7^o$.

The Gibbs methodology presented here yields samples of the lensed CMB $\widetilde T(x)$ and the lensing potential $\phi(x)$ conditional on the data. Moreover, as a byproduct of Algorithm \ref{after HMC}, we also obtain samples of the unlensed $T(x)$ given the data. By averaging $200$ draws from the Gibbs chain one can construct an approximation to $E(T(x) | \text{data})$, shown in the upper-right plot of Figure \ref{tilde fig}. In the bottom-left plot of Figure \ref{tilde fig} we show the  difference $T(x) - E(T(x) | \text{data})$.
When compared to the nominal difference between lensed and unlensed CMB $T(x) - \widetilde T(x)$, shown bottom-right in Figure \ref{tilde fig}, one can see that the Gibbs methodology is successful at delensing the observed CMB. To get a better visualization of the individual draws from the posterior, the right plot in Figure \ref{slice fig} shows a horizontal cross section of the posterior draws of $T(x)$ taken at vertical degree mark $12.7^o$ and is magnified near the masking region for better visual inspection.

\begin{figure*}
\begin{center}
{\includegraphics[height=2.1in]{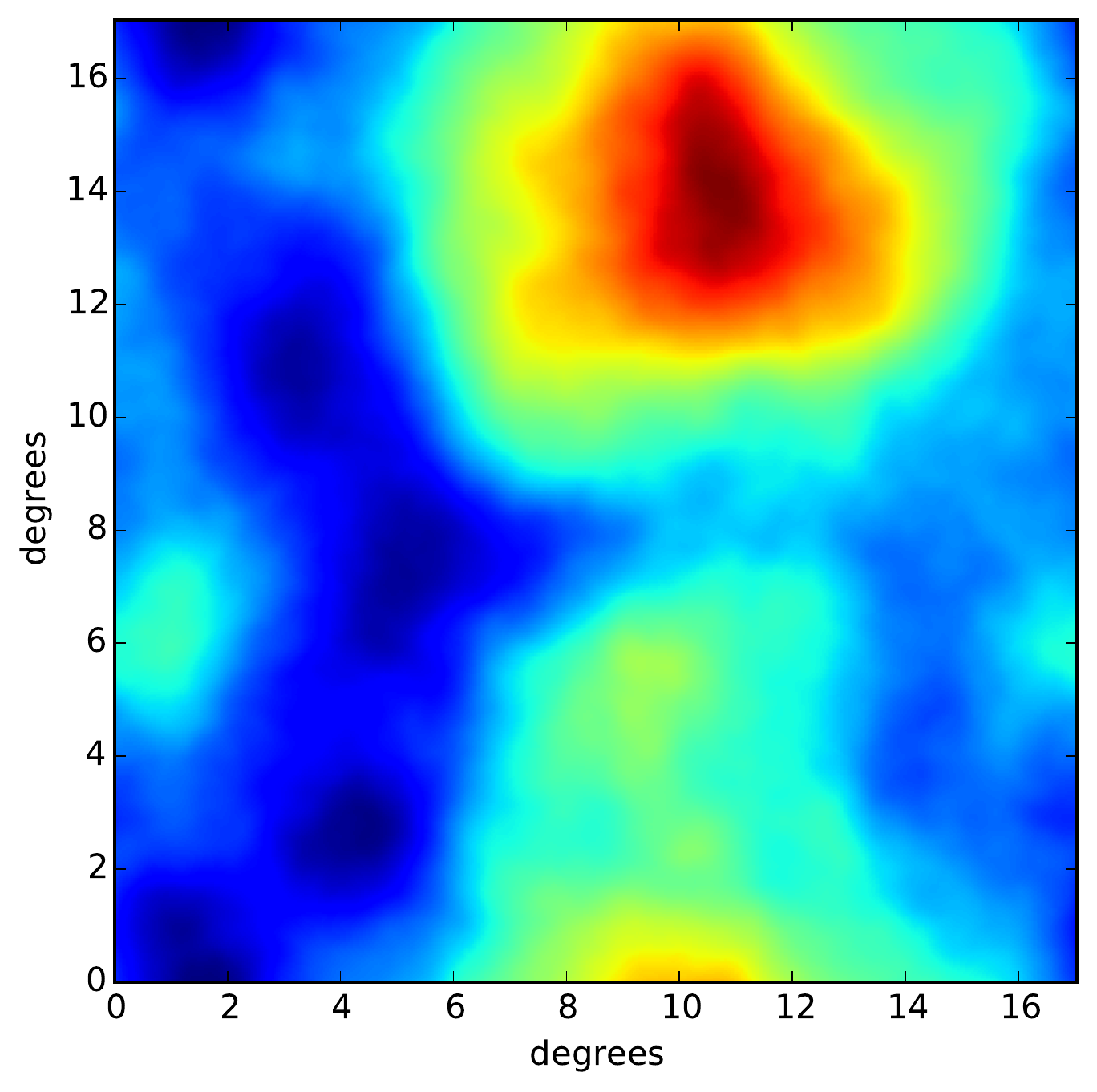}}%
{\includegraphics[height=2.1in]{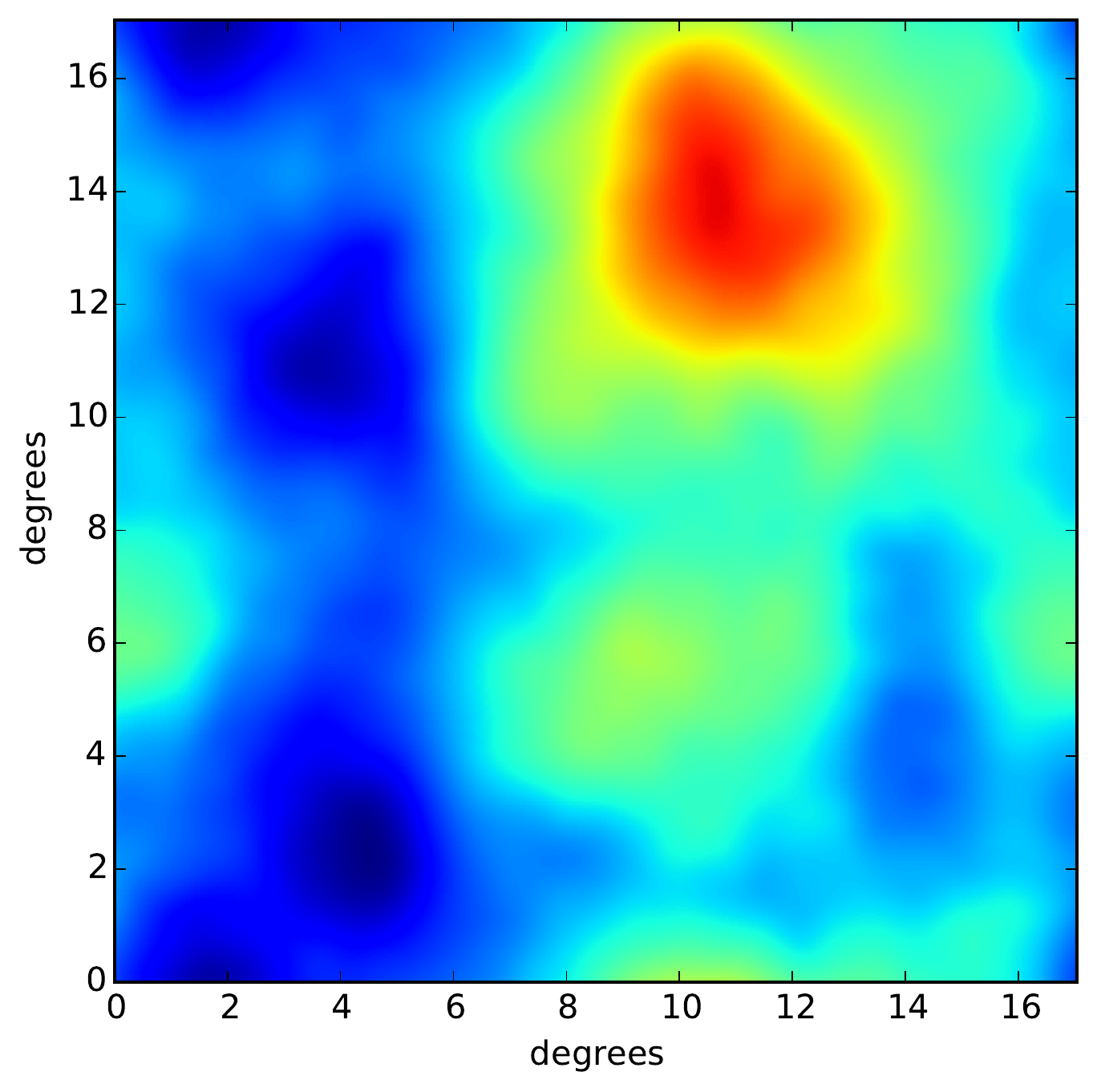}}%
{\includegraphics[height=2.1in]{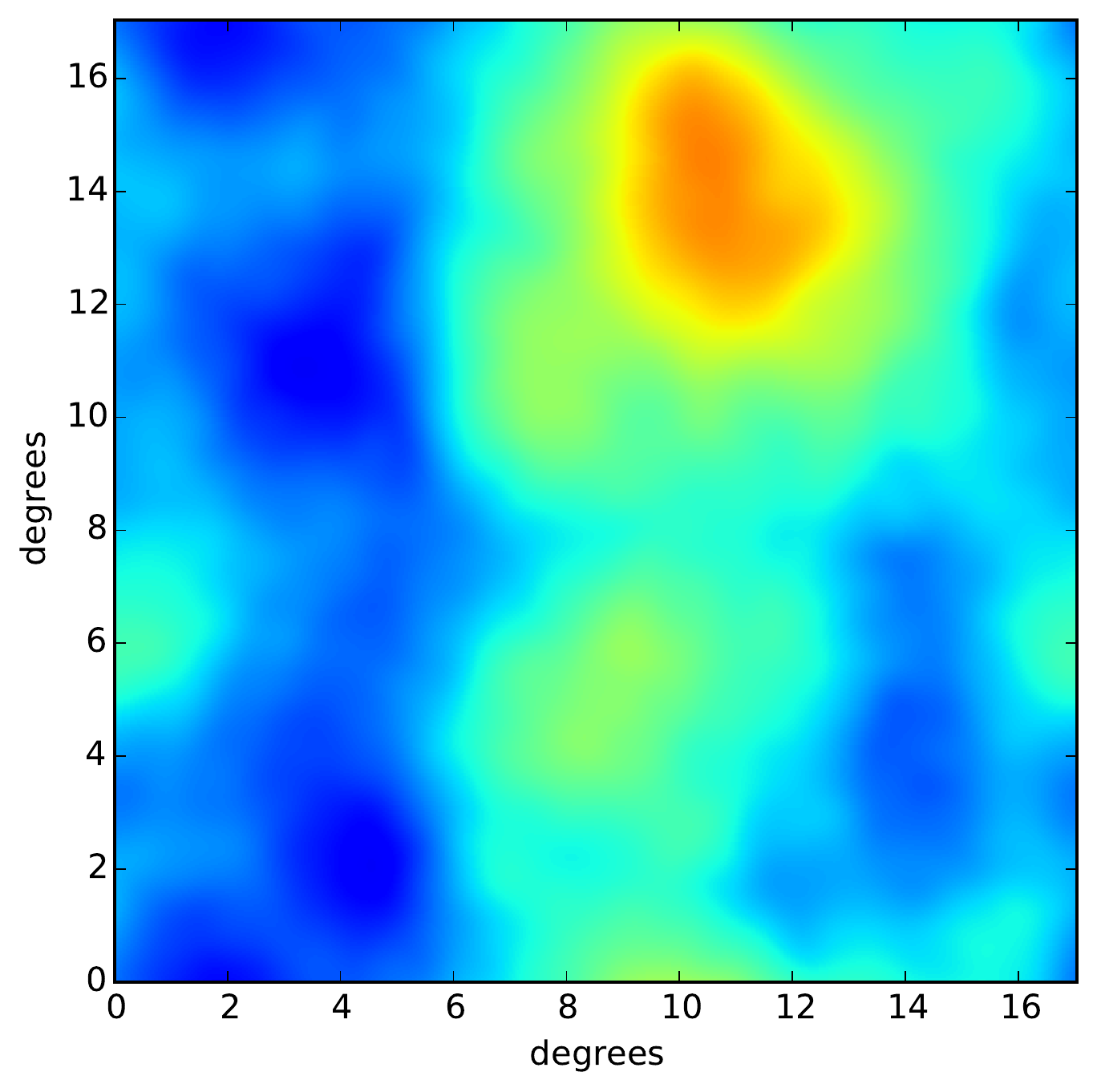}}%
\end{center}
\caption{\label{phix fig}
{\em Left:} simulation truth $\phi(x)$. {\em Middle:} posterior mean $E(\phi(x)|\text{data})$. {\em Right:} The quadratic estimate. To avoid difficulties associated with masked data when using the quadratic estimate,  the estimate shown at right is applied to the full data set with the masked region removed. In contrast, the data used for the Bayesian methodology is masked as shown in Figure \ref{tilde fig}.}
\end{figure*}

\begin{figure*}
\begin{center}
{\includegraphics[height=2.6in]{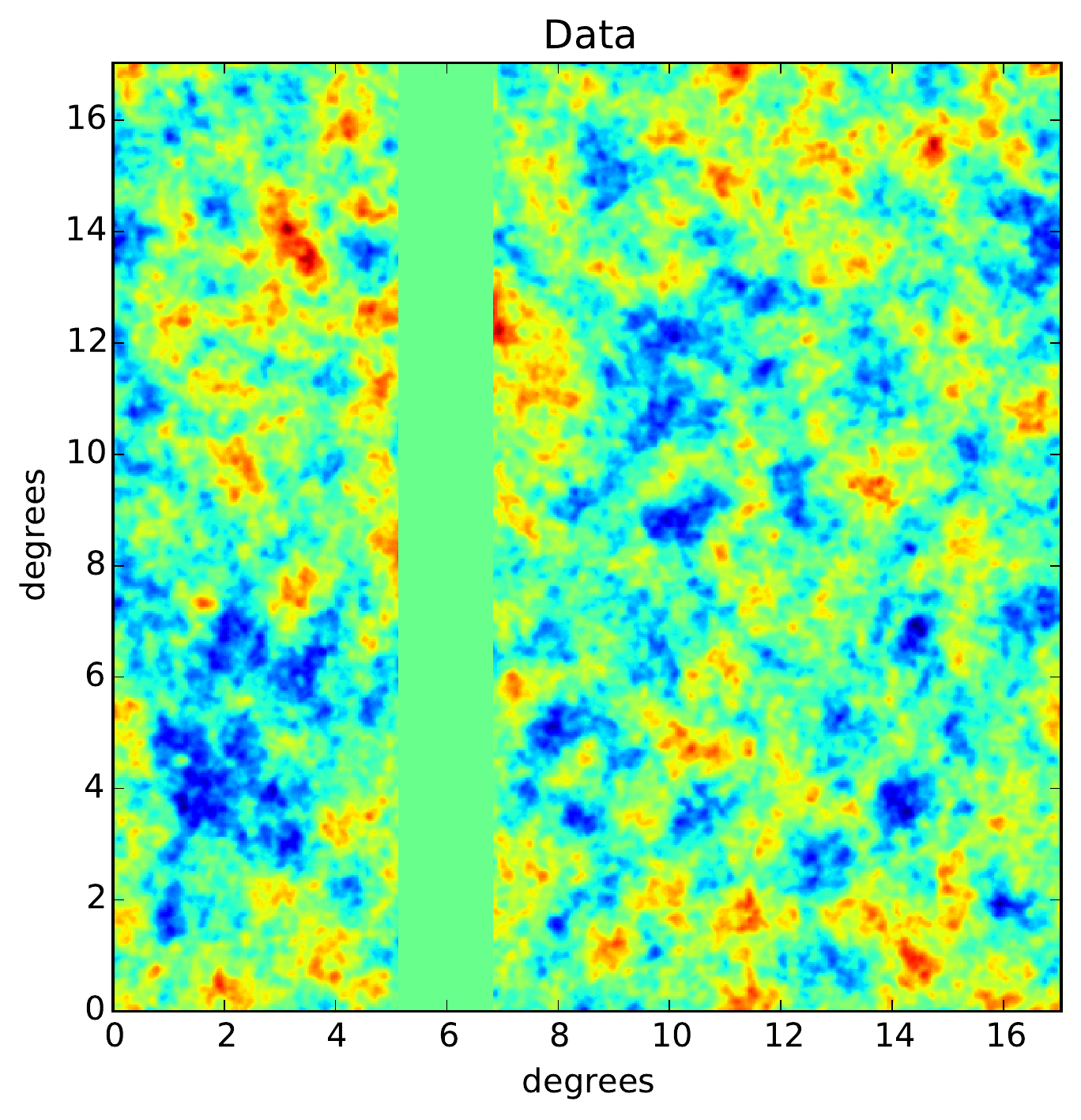}}%
{\includegraphics[height=2.6in]{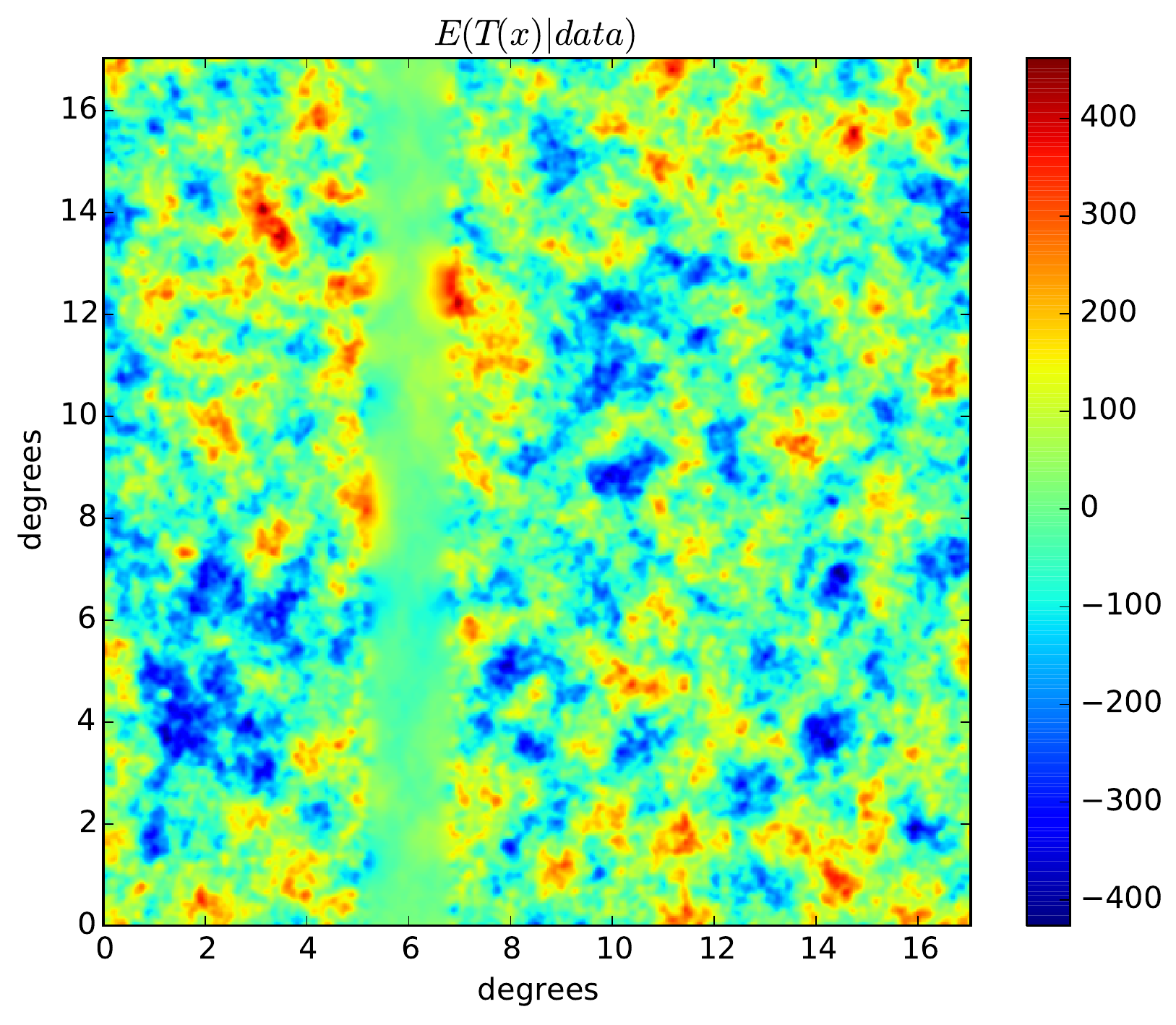}}\\
{\includegraphics[height=2.6in]{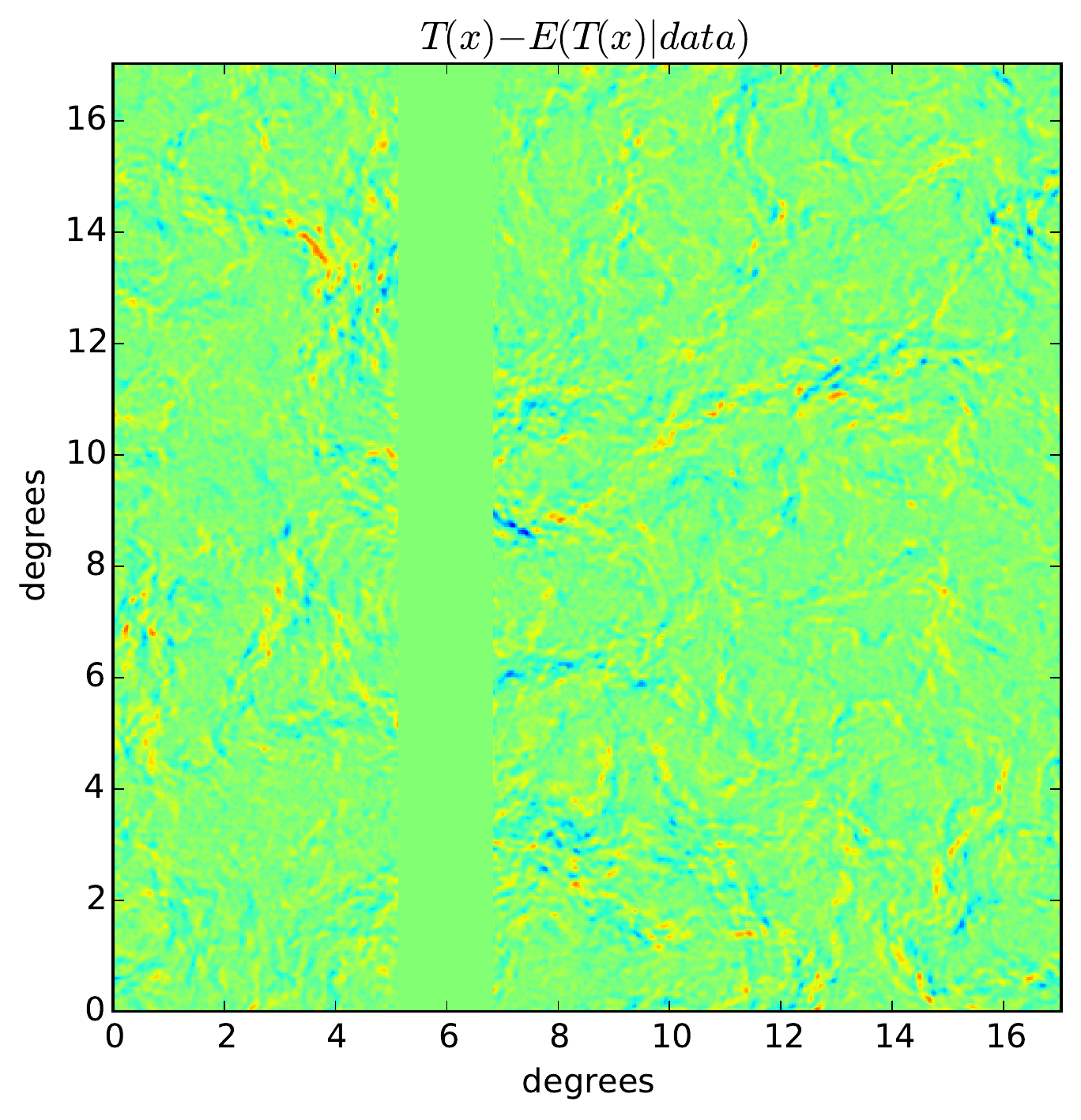}}%
{\includegraphics[height=2.6in]{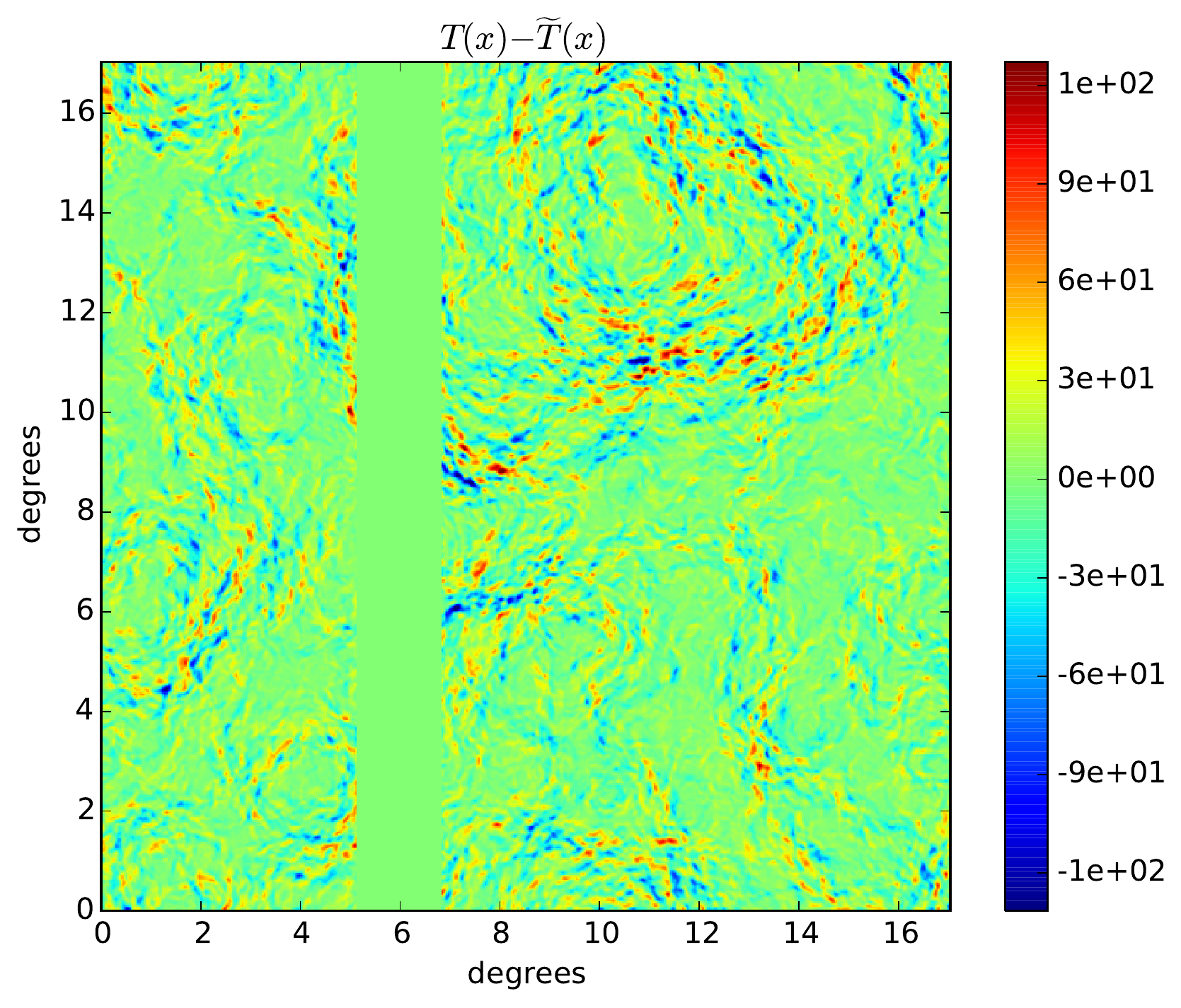}}%
\end{center}
\caption{\label{tilde fig}
{\em Upper left:} simulated lensed CMB data with masking and additive white noise (at level $8.0$ $\mu K$ arcmin). {\em Upper right:} posterior mean $E(T(x)|\text{data})$. {\em Lower left:} This plot shows $T(x) - E(T(x)|\text{data})$ and probes the ability of the Bayesian methodology to delense the observations. This plot should be compared with the nominal difference between the simulation truth unlensed CMB and the lensed CMB,  $T(x) - \widetilde T(x)$, shown {\em bottom right}.
 }
\end{figure*}

\begin{figure*}
\begin{center}
{\includegraphics[height=2.5in]{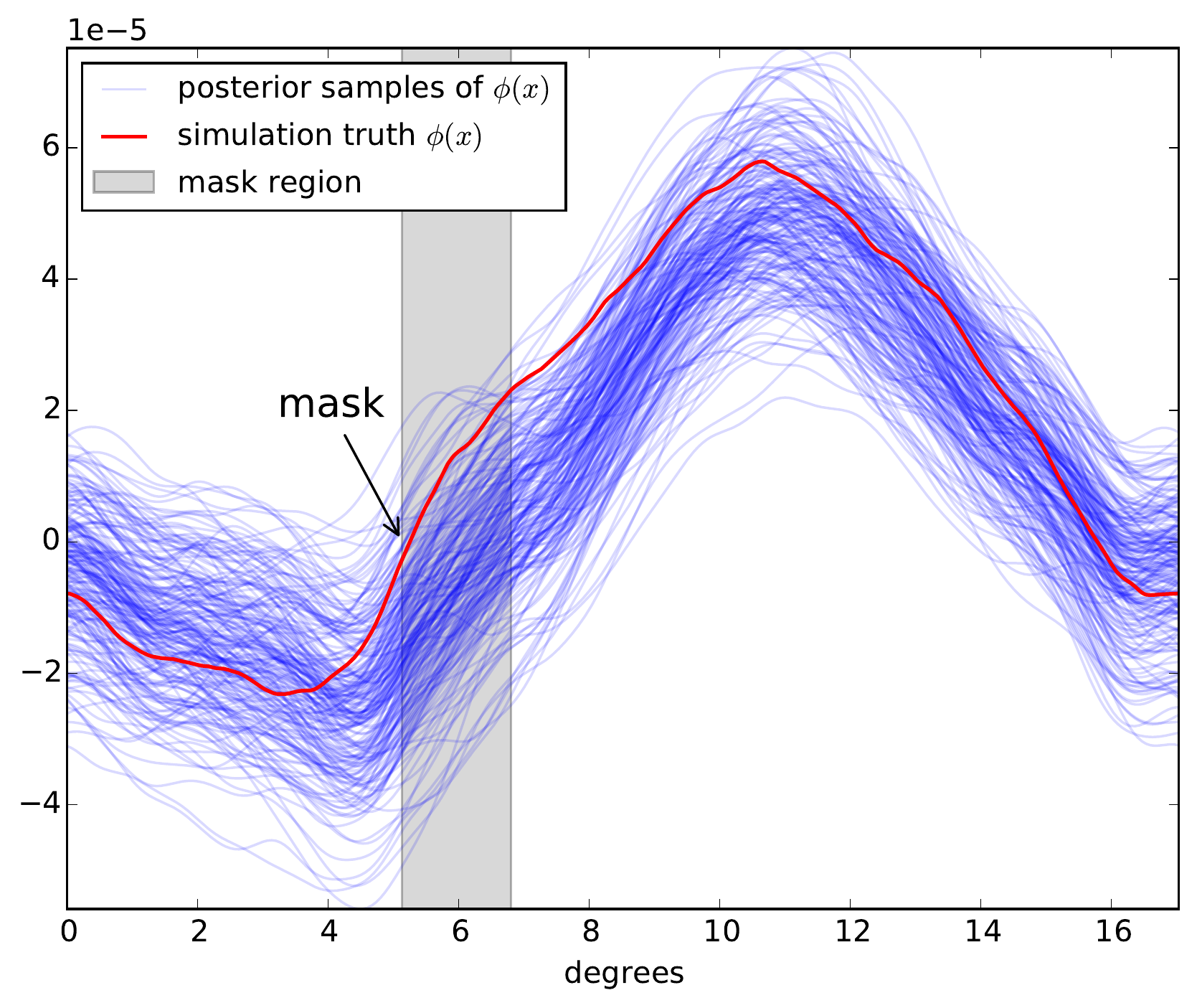}}%
{\includegraphics[height=2.5in]{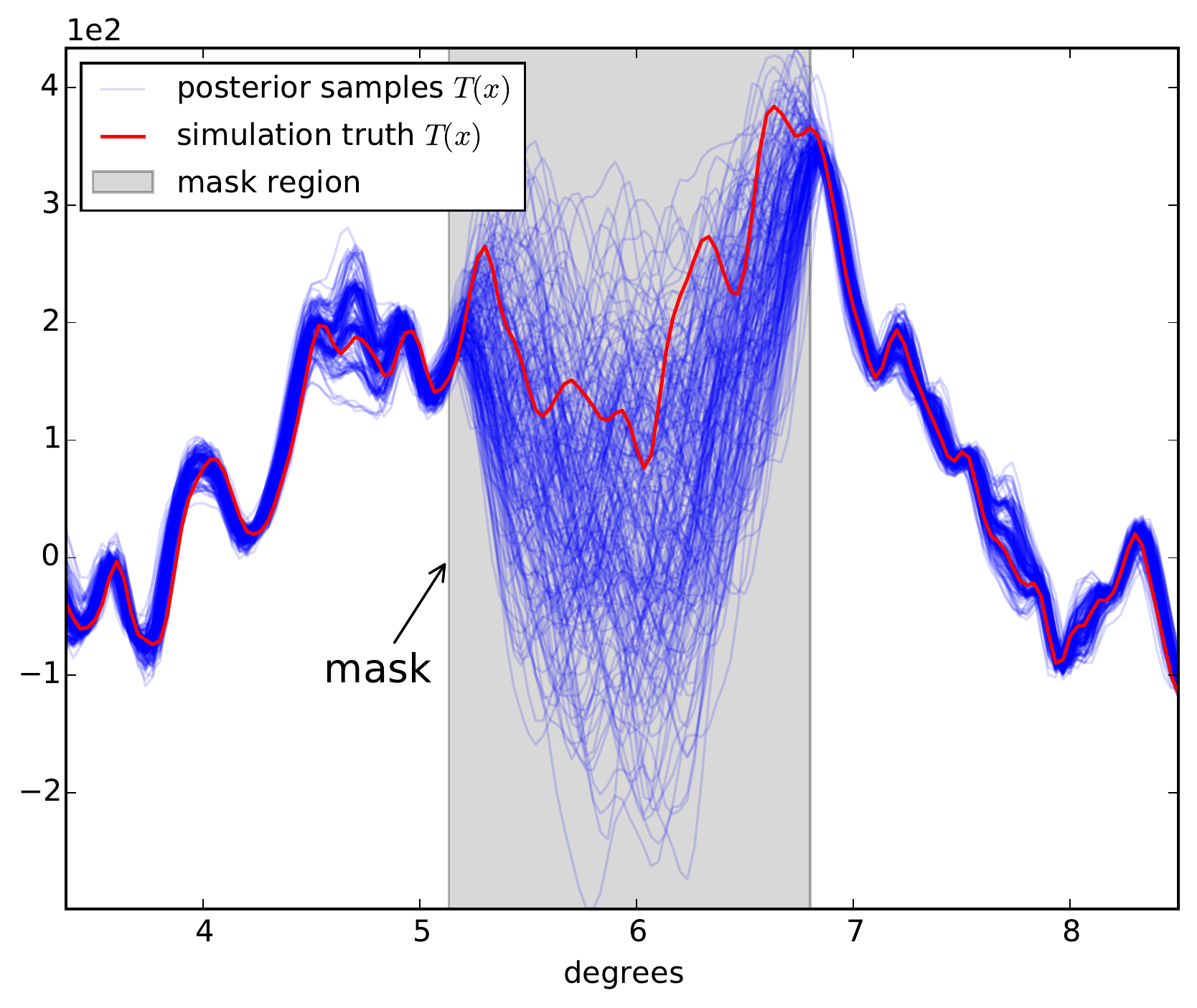}}
\end{center}
\caption{\label{slice fig} Here we plot one dimensional slices of the posterior draws from $P(\phi(x)|\text{data})$ and $P(T(x)|\text{data})$. For reference, these slices are taken from the horizontal regions at vertical degree mark $12.7^o$ in Figures \ref{tilde fig} and \ref{phix fig}.
}
\end{figure*}

In Figures \ref{p spec fig} and \ref{t spec fig} we summarize the posterior draws for $\phi$ and $T$ in the Fourier domain. The left plot of Figure \ref{p spec fig} shows $95\%$ posterior regions for  $l^4|\phi_l|^2/(4\delta_0)$  averaged over $l$ in wavenumber bins. For comparison the simulation true values of  $l^4|\phi_l|^2/(4\delta_0)$ are shown in red and the spectral density $l^4 C^{\phi\phi}_l/4$ is plotted in the black solid line. 
The same quantities are shown for $l^2|T_l|^2/(\delta_0)$ in the right plot of Figure \ref{p spec fig}. Finally, in Figure \ref{t spec fig} we show the empirical cross correlation of the posterior draws over wavenumber bins between the simulation truth and the posterior samples of $\phi_l$ and $T_l$. In particular,
samples were generated from $\frac{1}{c}\sum_{l\in \Delta l} \phi^\text{sim}_l\phi^*_l $ and $\frac{1}{c}\sum_{l\in \Delta l} T^\text{sim}_lT^*_l $ where $\Delta l$ is a frequency wavenumber bin,  $\phi_l$ and $T_l$ are the simulation truth, $\phi^\text{sim}_l$  and $T^\text{sim}_l$  are sampled from the Gibbs algorithm presented here and $c$ is a normalization constant which transforms to a correlation scale.  Notice that the plotted correlations trend to $0$ for larger wavenumber. This is what one would expect since larger wavenumber have correspondingly less information which causes the posterior to revert back to the prior.

In Figure \ref{assessing convergence fig}  we show the Gibbs chain correlation length scale  and the speed of mixing for different statistics of the lensing potential. The left plot shows the Gibbs chain for $\phi(x)$ where $x = (9.9^o, 13.2^o)$ and $x = (1.6^o, 1.6^o)$ in the same degree coordinates given in Figures \ref{tilde fig} and \ref{phix fig}. The right plot shows the real and imaginary parts of $\phi_l$ where the frequency vector $l$ is set to  $(126.56,63.28)$. Each dashed line represents the corresponding simulation truth parameters. These plots suggest that the Gibbs chain is mixing well and that the correlation length scale is small enough so that thinning by $100$ is sufficient to yield relatively uncorrelated samples.

\begin{figure*}
\begin{center}
{\includegraphics[height=2.5in]{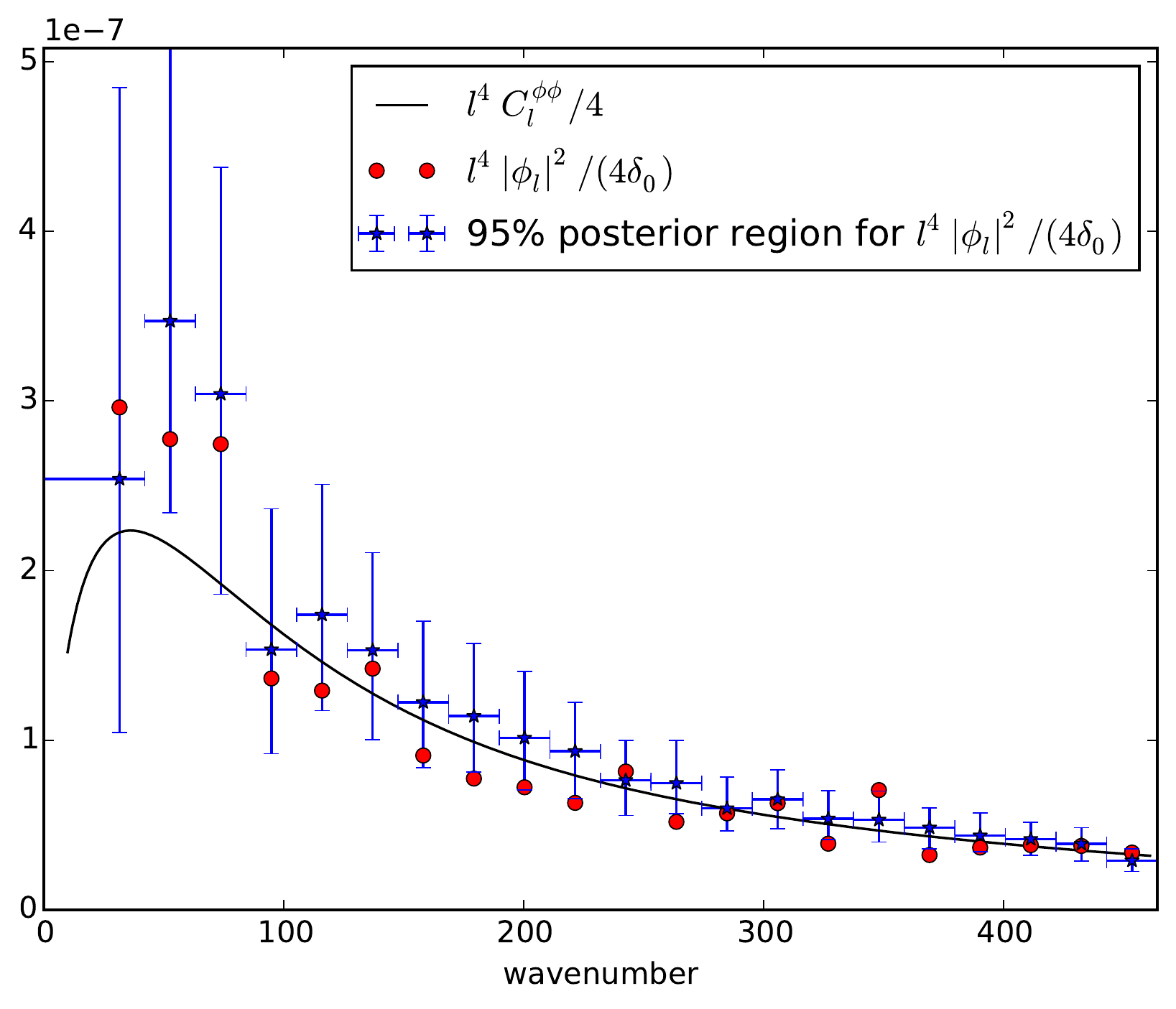}}%
{\includegraphics[height=2.5in]{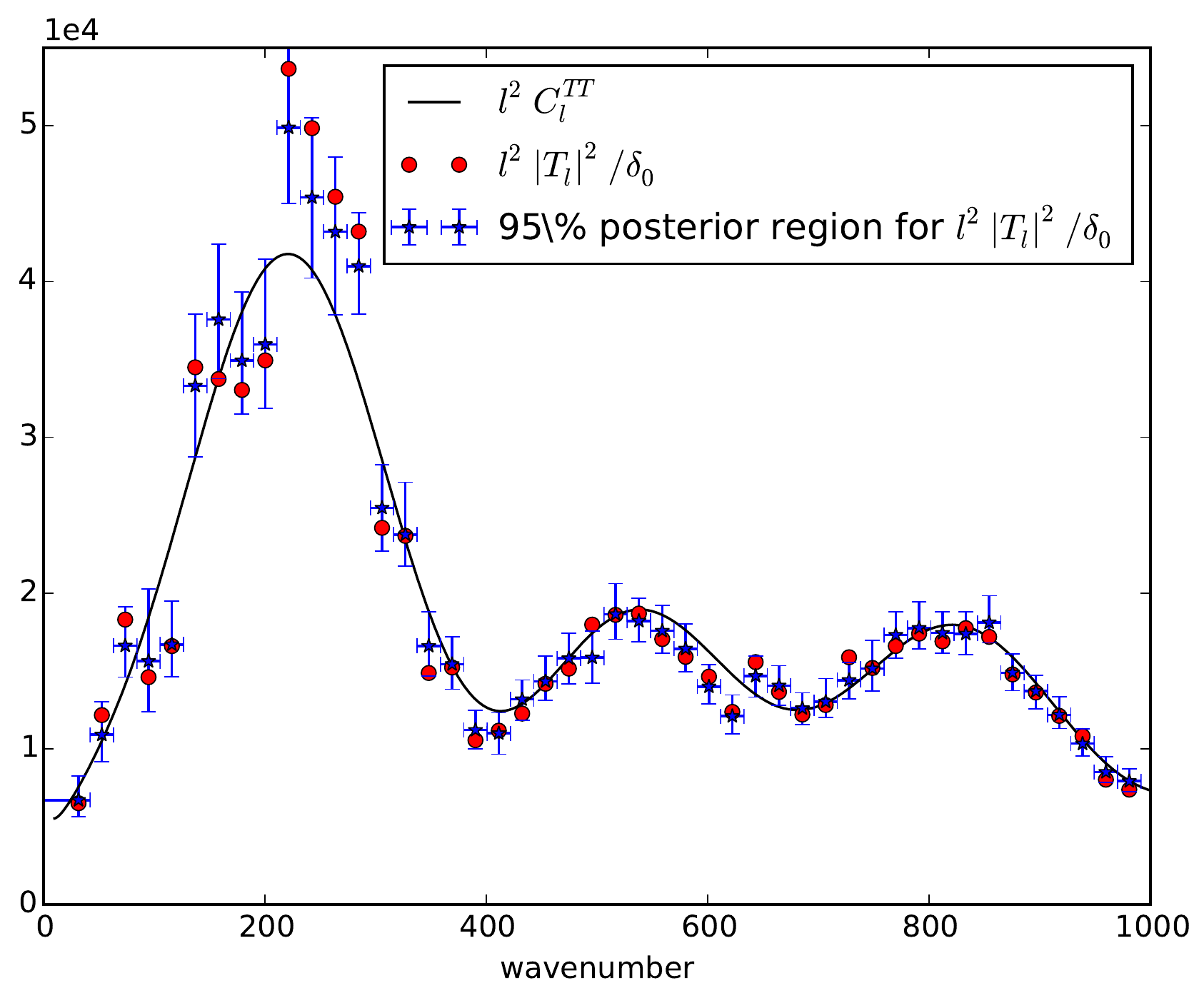}}
\end{center}
\caption{\label{p spec fig}
Estimates of $l^4|\phi_l|^2/4$  and $l^2|T_l|^2$ (shown in blue), scaled to the units of the corresponding spectral density. The red dots show $l^4|\phi_l|^2/4$  and $l^2|T_l|^2$ for the simulation truth (similarly scaled). The discrepancy between the red dots and the spectral densities, shown in black, is exclusively due to cosmic variance. The confidence bars show $95\%$ probability regions from the posterior distributions $P(l^4|\phi_l|^2/4\,|\,\text{data})$ and $P(l^2|T_l|^2\,|\,\text{data})$.
}
\end{figure*}

\begin{figure*}
\begin{center}
{\includegraphics[height=2.5in]{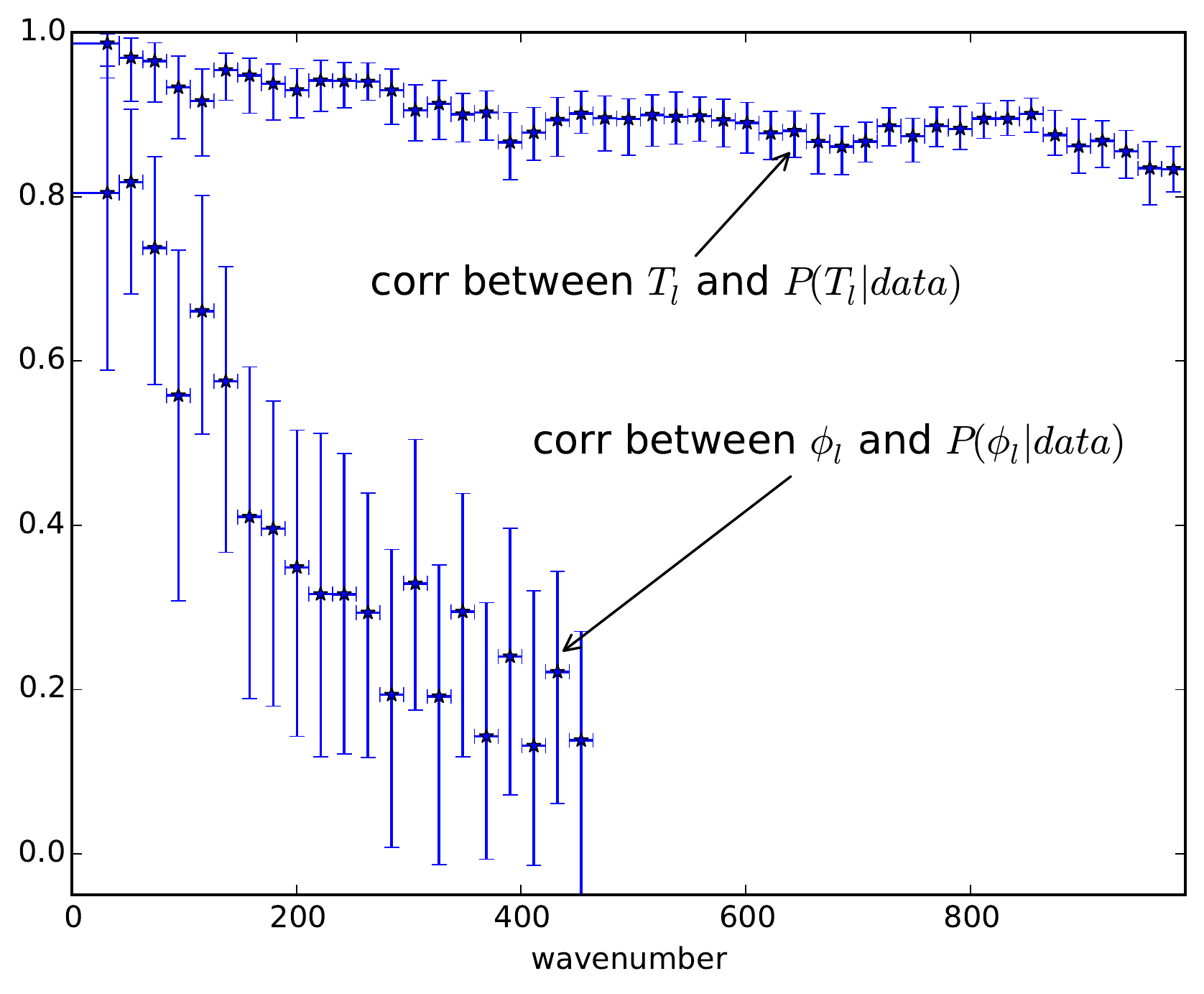}}%
\end{center}
\caption{\label{t spec fig}
This plot summaries the correlation, in $\Delta l$ wavenumber bins, between the simulation truth and their corresponding posterior samples. In particular,
samples were generated from $\frac{1}{c}\sum_{l\in \Delta l} \phi^\text{sim}_l\phi^*_l $ and $\frac{1}{c}\sum_{l\in \Delta l} T^\text{sim}_lT^*_l $ where $\Delta l$ is a frequency wavenumber bin,  $\phi_l$ and $T_l$ are the simulation truth, $\phi^\text{sim}_l$  and $T^\text{sim}_l$  is sampled from the Gibbs algorithm presented here and $c$ is a normalization constant which transforms to a correlation scale.  Recall that $|l|_\text{max}$ for the lensing potential is $\sim460$ and is $\sim2700$ for the unlensed CMB, which explains why the correlation for $\phi_l$ only extends to $460$.
Notice that the plotted correlations trend to $0$ for larger wavenumber.
This is what one would expect from the Bayesian posterior. Indeed, the data is less informative at larger wavenumber. This causes the Bayesian posterior to revert to the prior, which will be uncorrelated with the simulation truth. 
}
\end{figure*}

\begin{figure*}
\begin{center}
{\includegraphics[height=2.5in]{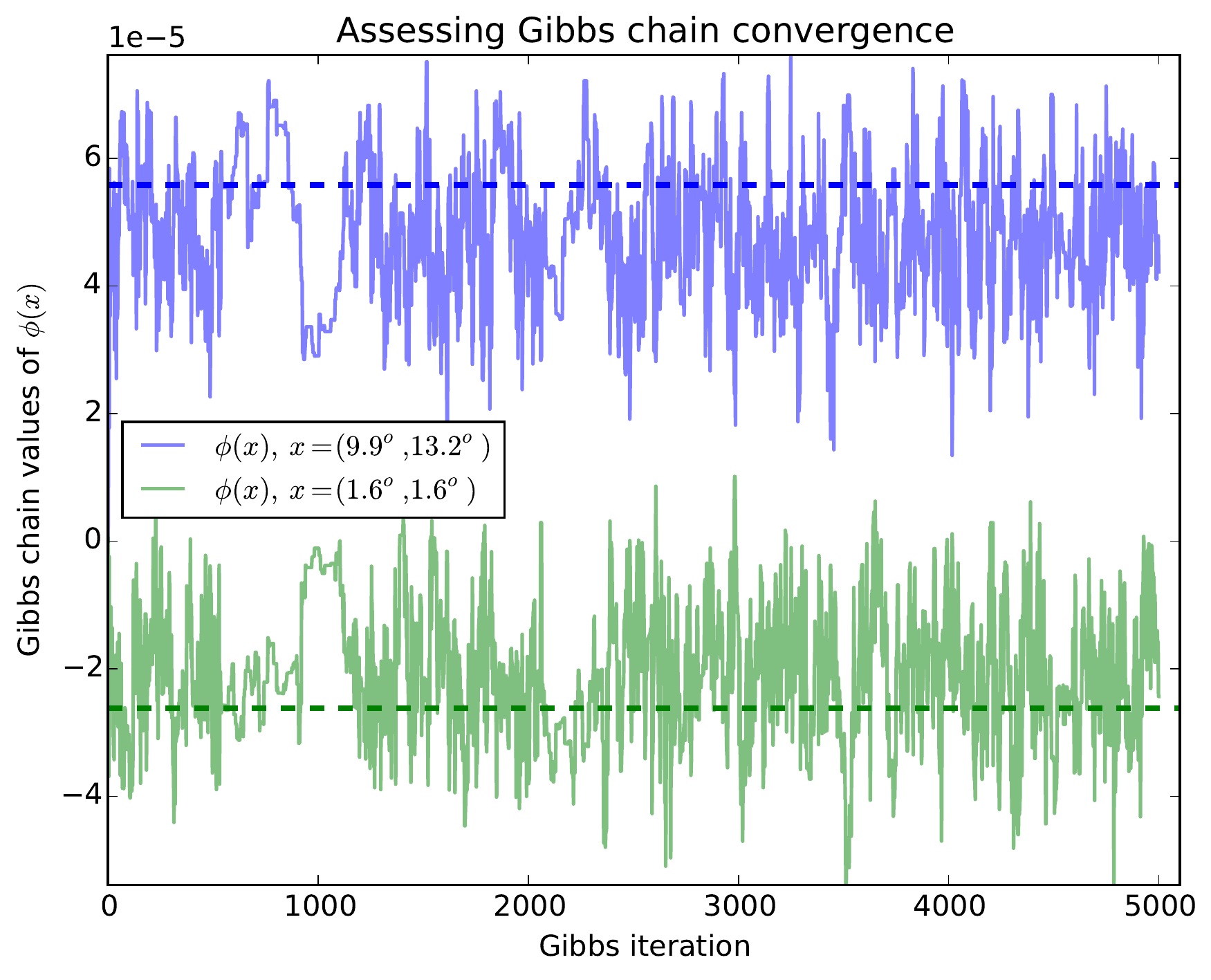}}%
{\includegraphics[height=2.5in]{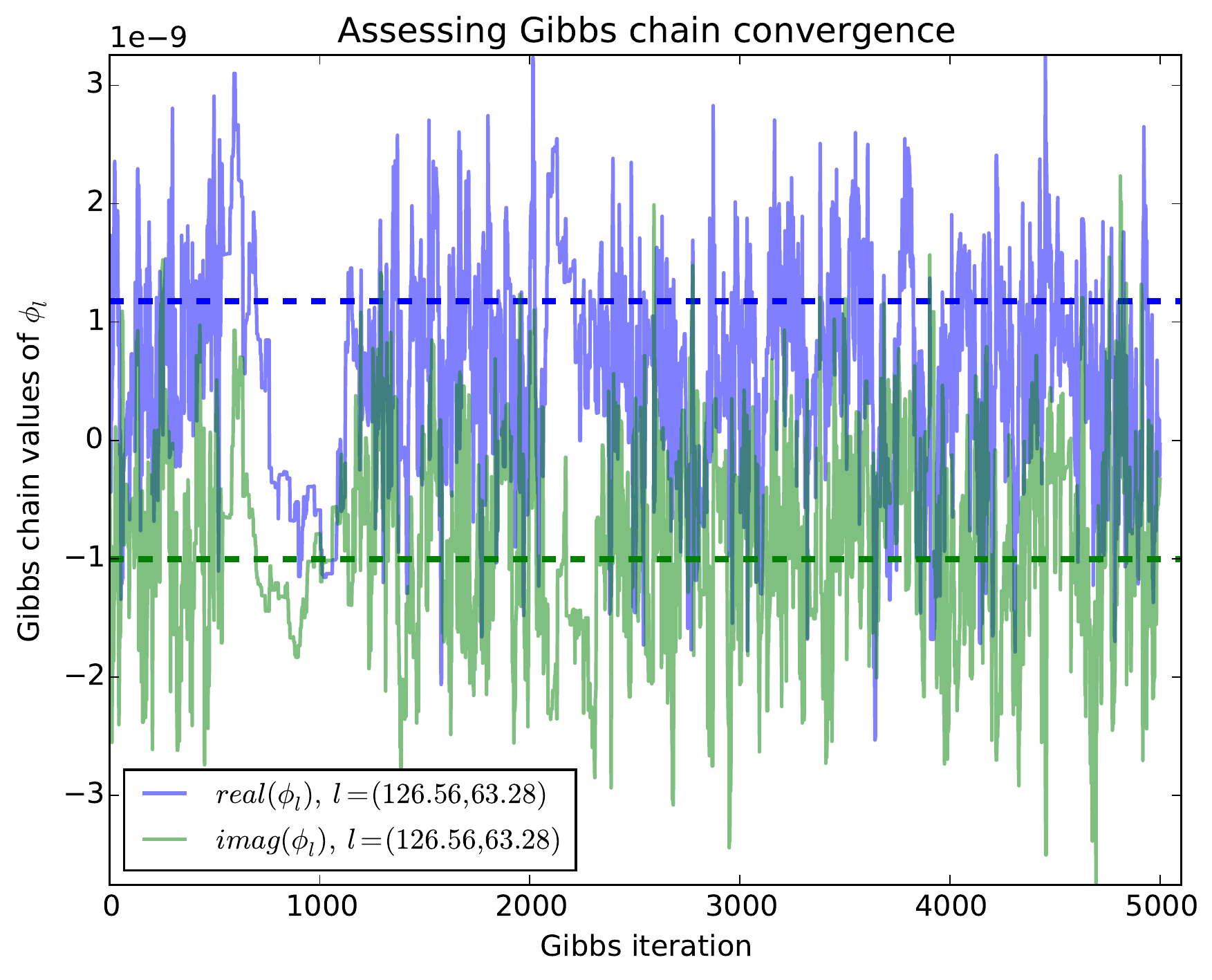}}
\end{center}
\caption{\label{assessing convergence fig} This plot illustrates the Gibbs chain correlation length scale  and the speed of mixing for different statistics of the lensing potential. The left plot shows the Gibbs chain for $\phi(x)$ where $x = (9.9^o, 13.2^o)$ and $x = (1.6^o, 1.6^o)$ in the same degree coordinates given in Figures \ref{tilde fig} and \ref{phix fig}. The right plot shows the real and imaginary parts of $\phi_l$ where the frequency vector $l$ is set to  $(126.56,63.28)$. Each dashed line represents the corresponding simulation truth parameters. Recall that the algorithm is initialized with a zero lensing potential.
}
\end{figure*}

%
%
\section{Concluding remarks}

In this paper we construct a prototype algorithm which establishes that it is possible to construct a fast Gibbs sampler of the Bayesian posterior for the unknown lensing potential and the de-noised CMB temperature map. This prototype solves one of the fundamental obstacles in a Gibbs implementation of the Bayesian lensing problem: the naive parameterization $(T, \phi)$ is extremely slow. We identify the ancillary and sufficient parametrization duality for this problem and notice that the slowness of the Gibbs chain for the ancillary parametrization  $(T, \phi)$ translates to a fast chain for the sufficient parametrization  $(\widetilde T, \phi)$. This observation is one of the main contributions of this paper. The second contribution is the use of the anti-lensing approximation along with Claim \ref{grad claim} which makes feasible the development of a Hamiltonian Markov Chain algorithm for sampling from  $P(\phi | \widetilde T)$. Without the Fourier transform characterization in Claim \ref{grad claim} the HMC would be computational prohibitive. The third contribution of this paper is to recognize that a new messenger algorithm \cite{elsner2013efficient,jasche2014matrix} can be adapted for high resolution conditional Gaussian sampling under the irregular sampling scenario needed for $P(\widetilde T|\phi, \text{data})$.

Notice that both sampling steps $P(\phi |\widetilde T)$ and $P(\widetilde T|\phi, \text{data})$ in our algorithm utilize a high resolution embedding for $\widetilde T$. This high resolution embedding is most likely the dominant bottleneck for scaling the current prototype implementation presented here.  In this paragraph we discuss what is needed to avoid using this embedding for scaling up this algorithm. When sampling from the conditional $P(\phi |\widetilde T)$, the main challenge is to compute $A^q(x)$ and $B(x)$, as defined in Claim  \ref{grad claim}.  Within the HMC algorithm, a proposed lensing potential $\phi$ changes iteratively. A each iteration  one requires a new computation of $A^q(x)$ and $B(x)$. In our prototype, a spline interpolation performs the task of fast anti-lensing required for  $A^q(x)$ and $B(x)$. It is an open problem how to compute this fast anti-lensing without the need for a high resolution $\widetilde T$. Simulating from $P(\widetilde T|\phi, \text{data})$ also requires a high resolution embedding in our prototype. This simply expresses the fact that given $\phi$ the field $\widetilde T$ is modeled as a non-stationary random field. To circumvent this difficulty we  transform to lensed coordinates as illustrated in Figure \ref{embed}. The challenge when avoiding this high resolution embedding, then, is to directly generate conditional simulations of the non-stationary $\widetilde T$ given $\text{data}(x)=\widetilde T(x)+ n(x)$ and the lensing potential $\phi(x)$.

%
%

\section*{Acknowledgments}
BW acknowledges funding through his Chaire dÕExcellence from the Agence Nationale
de la Recherche (ANR-10-CEXC-004-01). This work has been done within the Labex ILP (reference ANR-10-LABX-63) part of the Idex SUPER, and received financial state aid managed by the Agence Nationale de la Recherche, as part of the programme Investissements d'avenir under the reference ANR-11-IDEX-0004-02. EA acknowledges grant support from NSF CAREER DMS-1252795.

\bibliography{refs}

%
%
\appendix

Before we proceed to the proofs we briefly discuss notation.
First, we do not differentiate, notationally, a random field with periodic boundary conditions on $(-L/2, L/2]^2$ and the case where $L\rightarrow \infty$ so that the Fourier series $\sum_{l \in \frac{2\pi}{L}\Bbb Z }   e^{i x\cdot l}  f_l \frac{2\pi/ L}{2\pi} $ converges to the continuous Fourier transform $\int_{\Bbb R^2}  e^{i x\cdot l}  f_l \frac{dl}{2\pi} $. 
For example, at times we will refer to an infinitesimal area element $dl$ or $dk$ in Fourier space, which simply equals $(2\pi / L)^2$ for large $L$. In this case, $\delta_l$ denotes a discrete dirac delta function which we equate with $1/dl$ when $l=0$ and zero otherwise. 
Secondly, for any function $f(x)$ let $f^\phi(x) = f(x-\nabla \phi(x))$ denote anti-lensing of $f$ and $f^\phi_l$ denote the Fourier transform of  $f^\phi(x)$.

\begin{proof}[{ Proof of Claim \ref{grad claim}}]
Since $\widetilde T$ is sufficient for the unknown $\phi$ we have that 
\begin{align*}
P(\phi|\widetilde T, \text{\rm data}) &= P(\phi|\widetilde T)\propto P(\widetilde T|\phi)P(\phi).
\end{align*}
Since $\phi(x)$ is an isotropic random field with spectral density $C_l^{\phi\phi}$ we have that $E(\phi^{\phantom{*}}_l \phi_{l^\prime}^*) = \delta_{l - l^\prime}C_l^{\phi\phi}$. 
Therefore  $E(\phi_l^{\phantom{*}}\phi_l^*) = \delta_{0}C_l^{\phi\phi}$ and $E(\phi_l\phi_{l}) = 0$ implies that the random variables $\re \phi_l$, $\im \phi_l$ are independent $\mathcal N(0, \frac{1}{2}\delta_0 C_l^{\phi\phi})$ for each fixed $l$.
Moreovoer  $\phi(x)$ takes values in $\Bbb R$ so that  $\phi_l = \phi_{-l}^*$.  This implies 
that $\phi_l$ and are independent random variables over all $l$ which are restricted to the Hermitian half of the Fourier grid, denoted $\Bbb H$ here. In particular,  if we exclude the zero frequency $l = 0$ we get
\begin{align}
\log P(\phi) - c_1 &=  - \frac{1}{2}\sum_{k \in \Bbb H\setminus \{0 \}}  \left[\frac{(\re \phi_k)^2}{\frac{1}{2}\delta_0 C_k^{\phi\phi}} +  \frac{(\im \phi_k)^2}{\frac{1}{2}\delta_0 C_k^{\phi\phi}} \right] = - \frac{1}{2}\int_{\Bbb R^2} \frac{|\phi_k|^2}{C_k^{\phi\phi}} dk \label{dr1} \\
\log P(\widetilde T| \phi) - c_2 &=  - \frac{1}{2}\sum_{k \in \Bbb H\setminus \{0 \}}  \left[\frac{(\re \widetilde T_k^\phi)^2}{\frac{1}{2}\delta_0 C_k^{TT}} +  \frac{(\im \widetilde T_k^\phi)^2}{\frac{1}{2}\delta_0 C_k^{TT}} \right] = - \frac{1}{2}\int_{\Bbb R^2} \frac{\bigl|\widetilde T_k^\phi\bigr|^2}{C_k^{TT}} dk \label{dr2}
\end{align}
where $c_1$ and $c_2$ are constants and  $\widetilde T^\phi(x)\equiv \widetilde T(x-\nabla \phi(x))$.
%

Taking derivatives in (\ref{dr1}) gives
\begin{equation}
\label{grad of prior}
\frac{\partial}{\partial \phi_l}\log P(\phi) = - 2(dl) \frac{\phi_l}{C^{\phi\phi}_{l}}.
\end{equation}
Taking derivatives in (\ref{dr2}) gives
\begin{align}
\frac{\partial}{\partial \re\phi_l}\log P(\widetilde T| \phi) 
&= - \re \int_{\Bbb R^2} \frac{\partial \widetilde T_k^{\phi}}{\partial \re\phi_l} \, \frac{\widetilde T_k^{\phi^*}}{C_k^{TT}}  \,dk \label{ll1}\\
\frac{\partial}{\partial \im\phi_l}\log P(\widetilde T| \phi) 
&=  -\re \int_{\Bbb R^2}\frac{\partial \widetilde T_k^{\phi}}{\partial \im\phi_l} \, \frac{\widetilde T_k^{\phi^*}}{C_k^{TT}} \, dk \label{ll2}.
\end{align}
Taking linear combinations of the two equalities in Lemma \ref{partialconj} below we get
\begin{align}
\frac{\partial \widetilde T^\phi_k}{\partial \re \phi_{l}}  &= 
\frac{1}{2}\frac{\partial \widetilde T^\phi_k}{\partial\phi_{l}} + \frac{1}{2}\frac{\partial \widetilde T^\phi_k}{\partial\phi^*_{l}}
=
\frac{dk}{2 \pi}  \sum_{q=1,2} il_q\left\{ [(\nabla^q\widetilde T)^\phi]_{k-l}  -   [(\nabla^q\widetilde T)^\phi]_{k+l}  \right\}\\
\frac{\partial \widetilde T^\phi_k}{\partial \im \phi_{l} }  &= 
\frac{-i}{2}\frac{\partial \widetilde T^\phi_k}{\partial\phi_{l}} + \frac{i}{2}\frac{\partial \widetilde T^\phi_k}{\partial\phi^*_{l}}
=
\frac{ dk}{2 \pi} \sum_{q=1,2} l_q\left\{ -[(\nabla^q\widetilde T)^\phi]_{k-l}  -   [(\nabla^q\widetilde T)^\phi]_{k+l}  \right\}.
\end{align}
Now the above two equations establish, by Lemma \ref{forreal} below, that  both integrals $\int_{\Bbb R^2} \frac{\partial \widetilde T_k^\phi}{\partial \re\phi_l} \, \frac{\widetilde T_k^{\phi^*}}{C_k^{TT}}  \,dk$ and $\int_{\Bbb R^2}\frac{\partial \widetilde T_k^\phi}{\partial \im\phi_l} \, \frac{\widetilde T_k^{\phi^*}}{C_k^{TT}} \, dk$ are real
which implies
\begin{align*}
 \frac{\partial}{\partial \phi_l}\log P(\widetilde T| \phi) &= -\int_{\Bbb R^2} \frac{\partial \widetilde T_k^\phi}{\partial \phi_l} \, \frac{\widetilde T_k^{\phi^*}}{C_k^{TT}}  \,dk  \\
 &= -\frac{dk}{ \pi} \sum_{q=1,2} il_q \int_{\Bbb R^2}  [(\nabla^q\widetilde T)^\phi]_{k+l} \, \frac{\widetilde T_k^{\phi^*}}{C_k^{TT}}  \,dk \\
 &= -  i 2 (dk) \sum_{q=1,2} l_q \int_{\Bbb R^2}  [(\nabla^q\widetilde T)^\phi]_{k+l} \, \frac{\widetilde T_k^{\phi^*}}{C_k^{TT}}  \,\frac{dk}{2\pi} \\
 & = -  i 2 (dk) \sum_{q=1,2} l_q \int_{\Bbb R^2} e^{-i x\cdot l} A^q(x) \, B(x)  \,\frac{dx}{2\pi},\quad\text{by Lemma \ref{conv} below}
 \end{align*}
 where $A^q(x) \equiv (\nabla^q\widetilde T)^\phi(x)$ and $B_k\equiv (\widetilde T_k^{\phi})^* / C_k^{TT}$.
\end{proof}

\begin{lemma} 
\label{partialconj}
\begin{align}
\frac{\partial \widetilde T^\phi_k}{\partial \phi_{ l} }  &= \frac{  dk}{\pi}\sum_{q=1,2} \phantom{-}il_q[(\nabla^q\widetilde T)^\phi]_{k+l} \\
\frac{\partial \widetilde T^\phi_k}{\partial \phi^*_{ l} }  & = \frac{ dk }{\pi}\sum_{q=1,2} -il_q [(\nabla^q\widetilde T)^\phi]_{k-l}
\end{align}
where $\nabla^q \widetilde T\equiv \frac{\partial \widetilde T}{\partial x_q}$.
\end{lemma}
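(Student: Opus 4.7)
The plan is to differentiate $\widetilde T^\phi(x) = \widetilde T(x - \nabla\phi(x))$ in pixel space via the chain rule and then take the Fourier transform. I would start from $\widetilde T^\phi_k = \int \widetilde T^\phi(x) e^{-ix\cdot k}\frac{dx}{2\pi}$ and bring the derivative inside the integral. Because $\phi$ is real, its Fourier coefficients satisfy $\phi_{-l} = \phi_l^*$, so in the expansion $\phi(x) = \sum_{l'} \phi_{l'} e^{i x\cdot l'}\frac{dl'}{2\pi}$ a variation of $\re\phi_l$ or $\im\phi_l$ moves both the $l$ mode and the $-l$ mode. I would first compute $\partial(\partial_q\phi)(x)/\partial\re\phi_l$ and $\partial(\partial_q\phi)(x)/\partial\im\phi_l$ separately; the twin contributions at $\pm l$ collapse to $-2l_q\sin(x\cdot l)\frac{dl}{2\pi}$ and $-2l_q\cos(x\cdot l)\frac{dl}{2\pi}$ respectively.

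Applying the chain rule $\partial_\alpha \widetilde T^\phi(x) = -\sum_q (\nabla^q\widetilde T)^\phi(x)\,\partial_\alpha(\partial_q\phi)(x)$ and combining with the paper's convention $\frac{\partial}{\partial\phi_l} \equiv \frac{\partial}{\partial\re\phi_l} + i\frac{\partial}{\partial\im\phi_l}$ produces the crucial algebraic simplification
\[
\sin(x\cdot l) + i\cos(x\cdot l) = i\,e^{-ix\cdot l},
\]
so that
\[
\frac{\partial \widetilde T^\phi(x)}{\partial\phi_l} = \frac{2\,dl}{2\pi}\sum_{q=1,2} il_q\,(\nabla^q\widetilde T)^\phi(x)\,e^{-ix\cdot l}.
\]
Multiplying by $e^{-ix\cdot k}$, integrating against $dx/(2\pi)$, and recognizing the integral as the Fourier coefficient at frequency $k+l$ of $(\nabla^q\widetilde T)^\phi$ delivers the first claimed identity. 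The second identity follows by the same computation with $\frac{\partial}{\partial\phi_l^*} = \frac{\partial}{\partial\re\phi_l} - i\frac{\partial}{\partial\im\phi_l}$: the sign flip converts $\sin(x\cdot l) + i\cos(x\cdot l)$ into $\sin(x\cdot l) - i\cos(x\cdot l) = -i\,e^{+ix\cdot l}$, which shifts the Fourier index to $k-l$ and flips the sign of the $il_q$ factor.

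The main obstacle is bookkeeping around the reality constraint. Each free Fourier parameter secretly controls two symmetric modes, which inserts a factor of $2$ that almost cancels the factor $1/2$ one might expect from a Wirtinger-type derivative; however, the paper's definition of $\partial/\partial\phi_l$ is \emph{not} the standard Wirtinger normalization (it does not satisfy $\partial\phi_l/\partial\phi_l = 1$), so a blind symbolic manipulation treating $\phi_l$ and $\phi_l^*$ as independent will misplace these constants. The safe route is to compute everything through the real parameters $(\re\phi_l,\im\phi_l)$, collect the $\sin$/$\cos$ combinations, and only then repackage them as the complex exponentials $e^{\pm ix\cdot l}$ which produce the Fourier shifts $k\pm l$ in the final formulas.
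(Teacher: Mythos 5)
Your proposal is correct and follows essentially the same route as the paper: compute $\partial(\partial_q\phi)(x)/\partial\re\phi_l$ and $\partial(\partial_q\phi)(x)/\partial\im\phi_l$ including the Hermitian twin contribution at $-l$, apply the chain rule to $\widetilde T(x-\nabla\phi(x))$, and recombine via the paper's convention $\partial/\partial\phi_l = \partial/\partial\re\phi_l + i\,\partial/\partial\im\phi_l$ so that the surviving exponential $e^{-ix\cdot l}$ (your $\sin(x\cdot l)+i\cos(x\cdot l)=ie^{-ix\cdot l}$) shifts the Fourier index to $k+l$, with the conjugate case giving $k-l$. Your caution about the twin modes and the nonstandard (non-Wirtinger) normalization is exactly the bookkeeping the paper's proof performs, and your constants ($2\,dl/2\pi = dk/\pi$) match the stated result.
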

\begin{proof}
First notice
\begin{align}
\label{partial1}
\frac{\partial}{\partial \re \phi_{ l}}\frac{\partial\phi(x)}{ \partial x_q}  &= \int_{\Bbb R^2} i  k_q e^{ix \cdot k} \frac{\partial \phi_k}{\partial \re \phi_l }  \frac{d k}{2\pi} 
=\left[i  l_q e^{ix \cdot  l}  - i  l_q e^{-i x \cdot  l}  \right] \frac{d k}{2\pi}   \\
\label{partial2}
\frac{\partial}{\partial \im \phi_{ l}}\frac{\partial\phi(x)}{ \partial x_q}  &= \int_{\Bbb R^2} i  k_q e^{ix \cdot  k} \frac{\partial \phi_ k}{\partial \im \phi_{l} }  \frac{d k}{2\pi} 
=\left[-  l_q e^{ix \cdot  l}  -  l_q e^{-i x \cdot  l}  \right] \frac{d k}{2\pi}.  
\end{align}
This implies
\begin{align}
\nonumber \frac{\partial \widetilde T^\phi_k}{\partial \phi_{ l} } 
&=  \frac{\partial}{\partial  \phi_{ l} } \int_{\Bbb R^2}  e^{-ix \cdot k}\widetilde T(x-\nabla \phi(x))\frac{dx}{2\pi} \\
\nonumber &= \sum_{q=1,2}  \int_{\Bbb R^2} e^{-ix \cdot k} \nabla^q\widetilde T(x-\nabla \phi(x))\left[ -\frac{\partial}{\partial \re \phi_{ l}}\frac{\partial\phi(x)}{ \partial x_q} -i  \frac{\partial}{\partial \im \phi_{ l}}\frac{\partial\phi(x)}{ \partial x_q}\right]\frac{dx}{2\pi}\\
\nonumber &= \sum_{q=1,2}\frac{ il_q  dk}{\pi}  \int_{\Bbb R^2} e^{- ix \cdot (k+l)} \nabla^q\widetilde T(x-\nabla \phi(x))\frac{dx}{2\pi},\quad\text{by (\ref{partial1}) and (\ref{partial2})} \\
&=\sum_{q=1,2}\frac{  il_q dk}{\pi} [(\nabla^q\widetilde T)^\phi]_{k+l}
\end{align}
Similarly 
\begin{align}
\frac{\partial \widetilde T^\phi_k }{\partial \phi^*_l } 
 &=\sum_{q=1,2}\frac{ -il_q dk}{\pi} [(\nabla^q\widetilde T)^\phi]_{k-l}.
\end{align}
\end{proof}

\begin{lemma} 
\label{forreal}
If $A(x)$ and $B(x)$ are real scalar fields then  the two  integrals,  $\int_{\Bbb R^2} i\bigl\{ A_{k-l}  -   A_{k+l}  \bigr\} B^*_k dk$ and  $\int_{\Bbb R^2}\bigl\{  A_{k-l}  +    A_{k+l}   \bigr\} B^*_k dk$, are both real numbers.
\end{lemma}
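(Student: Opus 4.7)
The plan is to exploit the Hermitian symmetry of Fourier transforms of real-valued fields: if $f(x)\in\Bbb R$, then $f_{-k}=f_k^*$. Under this symmetry each of the two integrals collapses to twice either the real or the imaginary part of a single complex number, and therefore is real.

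First I would change variables $k\mapsto -k$ in the terms involving $A_{k+l}$. Because the integration domain is $\Bbb R^2$ and the Jacobian is $1$, this yields
\[
\int_{\Bbb R^2} A_{k+l}\,B^*_k\,dk \;=\; \int_{\Bbb R^2} A_{l-k}\,B^*_{-k}\,dk.
\]
Applying Hermitian symmetry (reality of $A$ and $B$) gives $A_{l-k}=A_{-(k-l)}=A_{k-l}^*$ and $B^*_{-k}=B_k$, so
\[
\int_{\Bbb R^2} A_{k+l}\,B^*_k\,dk \;=\; \int_{\Bbb R^2} A_{k-l}^*\,B_k\,dk \;=\; \Bigl(\int_{\Bbb R^2} A_{k-l}\,B^*_k\,dk\Bigr)^{\!*}.
\]
Denote the base integral by $I:=\int_{\Bbb R^2} A_{k-l}\,B^*_k\,dk$. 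Then the two integrals in the lemma become
\[
\int_{\Bbb R^2}\bigl\{A_{k-l}+A_{k+l}\bigr\}B^*_k\,dk \;=\; I+I^*\;=\;2\re I,
\]
\[
i\int_{\Bbb R^2}\bigl\{A_{k-l}-A_{k+l}\bigr\}B^*_k\,dk \;=\; i(I-I^*) \;=\; -2\im I,
\]
both of which lie in $\Bbb R$.

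There is no real obstacle in this argument; the only thing to be careful about is correctly tracking Hermitian symmetry through the sign flip $k\mapsto -k$, and recognizing that all the rest is straightforward linear algebra of complex numbers. Convergence of the integrals is assumed throughout the paper as a standing hypothesis on the fields involved, so no additional analytic input is required.
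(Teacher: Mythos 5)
Your proof is correct and uses the same essential mechanism as the paper's: the change of variables $k\mapsto -k$ combined with the Hermitian symmetry $A_{-k}=A_k^*$, $B_{-k}=B_k^*$ of real fields. The paper phrases it as showing each integral equals its own complex conjugate, while you identify them explicitly as $2\re I$ and $-2\im I$ for $I=\int_{\Bbb R^2}A_{k-l}B_k^*\,dk$; this is a slightly more informative packaging of the identical argument.
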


\begin{proof}
By a simple change of variables it is clear that 
$\int_{\Bbb R^2} \left(i\bigl\{ A_{k-l}  -   A_{k+l}  \bigr\} B^*_k\right)^* dk = \int_{\Bbb R^2} i\bigl\{ A_{k^\prime-l} - A_{k^\prime+l}  \bigr\} B_{k^\prime}^* dk^\prime$ and $\int_{\Bbb R^2} \left(\bigl\{  A_{k-l}  +    A_{k+l}   \bigr\} B^*_k \right)^* dk  = \int_{\Bbb R^2} \bigl\{ A_{k^\prime-l} + A_{k^\prime+l}  \bigr\} B_{k^\prime}^* dk^\prime$.


\end{proof}

The following lemma is equivalent to the so-called Convolution Theorem. We state it here for reference.
\begin{lemma} 
\label{conv}
If $A(x)$ and $B(x)$ are real scalar fields then  $\int_{\Bbb R^2} A_{k+l}  B^*_k \frac{dk}{2\pi}= \int_{\Bbb R^2} e^{-ix\cdot l} A(x)B(x)\frac{dx}{2\pi}$.
\end{lemma}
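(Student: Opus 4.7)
The plan is to prove the identity by unfolding both Fourier transforms $A_{k+l}$ and $B_k^{*}$ via the convention set up at the start of the appendix, interchanging the order of integration, and then collapsing the inner $k$-integral with the Fourier representation of the Dirac delta. An equivalent shortcut is to recognize the statement as Parseval's identity applied to the modulated function $x \mapsto e^{-i x \cdot l}A(x)$ paired with $B$, but I will present the direct computation since it keeps the bookkeeping of the paper's $\tfrac{dx}{2\pi}$ normalization transparent.

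First I would expand both factors using the Fourier convention
\begin{equation*}
A_{k+l} = \int_{\Bbb R^2} e^{-i x \cdot (k+l)} A(x)\,\frac{dx}{2\pi}, \qquad B_k^{*} = \int_{\Bbb R^2} e^{\,i y \cdot k} B(y)\,\frac{dy}{2\pi},
\end{equation*}
where the formula for $B_k^{*}$ uses that $B$ is real, so that complex conjugation merely flips the sign of the exponent. Substituting these into the left-hand side and interchanging the order of integration (justified by Fubini under the implicit decay/integrability assumptions on $A$ and $B$), I obtain
\begin{equation*}
\int_{\Bbb R^2} A_{k+l}\, B_k^{*}\, \frac{dk}{2\pi} = \int_{\Bbb R^2}\!\!\int_{\Bbb R^2} e^{-i x \cdot l}\, A(x) B(y) \left[\int_{\Bbb R^2} e^{\,i k \cdot (y-x)}\,\frac{dk}{2\pi}\right]\frac{dx}{2\pi}\,\frac{dy}{2\pi}.
\end{equation*}

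Next I would evaluate the inner $k$-integral as $2\pi\,\delta^2(y-x)$, which is the Fourier inversion identity in the paper's convention (the coefficient is $2\pi$, not $(2\pi)^2$, because $\tfrac{dk}{2\pi}$ here means $\tfrac{dk_1\,dk_2}{2\pi}$ rather than $\tfrac{dk_1\,dk_2}{(2\pi)^2}$). Integrating against this delta collapses the $y$-integral and yields
\begin{equation*}
\int_{\Bbb R^2} A_{k+l}\, B_k^{*}\, \frac{dk}{2\pi} = \int_{\Bbb R^2} e^{-i x \cdot l}\, A(x) B(x)\,\frac{dx}{2\pi},
\end{equation*}
which is precisely the claimed right-hand side.

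The only real obstacle is bookkeeping the $2\pi$ factors correctly in the paper's nonstandard two-dimensional convention; the analytic content is nothing more than the convolution theorem, since reality of $B$ gives $B_k^{*} = B_{-k}$, so $\int A_{k+l} B_k^{*}\,\tfrac{dk}{2\pi}$ is the Fourier-space convolution of $A$ and $B$ evaluated at frequency $l$, hence the Fourier transform of the pointwise product $A B$ at $l$.
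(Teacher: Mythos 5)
Your proof is correct: the paper itself offers no derivation for this lemma (it is stated as being ``equivalent to the so-called Convolution Theorem''), and your direct computation is exactly the standard argument that statement implicitly invokes. The $2\pi$ bookkeeping also checks out under the paper's convention, since $\int_{\Bbb R^2} e^{ik\cdot(y-x)}\frac{dk}{2\pi} = 2\pi\,\delta^{2}(y-x)$ and the leftover factor of $2\pi$ cancels against one of the two $\frac{1}{2\pi}$ measures, leaving precisely $\int_{\Bbb R^2} e^{-ix\cdot l}A(x)B(x)\frac{dx}{2\pi}$.
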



\end{document}